\theoremstyle{plain}
\newtheorem{theorem}{Theorem}[section]
\newtheorem{lemma}[theorem]{Lemma}
\newtheorem{proposition}[theorem]{Proposition}
\theoremstyle{definition}
\newtheorem{definition}[theorem]{Definition}
\newtheorem{remark}[theorem]{Remark}
\newcommand*{\cA}{\mathcal{A}}
\newcommand*{\cB}{\mathcal{B}}
\newcommand*{\cC}{\mathcal{C}}
\newcommand*{\cH}{\mathcal{H}}
\newcommand*{\cL}{\mathcal{L}}
\newcommand*{\cR}{\mathcal{R}}
\newcommand*{\cW}{\mathcal{W}}
\newcommand*{\cX}{\mathcal{X}}
\newcommand*{\cY}{\mathcal{Y}}
\newcommand*{\RR}{\mathbb{R}}
\newcommand*{\CC}{\mathbb{C}}
\newcommand*{\NN}{\mathbb{N}}
\newcommand*{\id}{I}
\newcommand*{\ket}[1]{| #1 \rangle}
\newcommand*{\bra}[1]{\langle #1 |}
\newcommand{\proj}[1]{|#1\rangle\!\langle #1|}
\newcommand*{\<}{\langle}
\renewcommand*{\>}{\rangle}
\newcommand{\imH}[1]{H_{(#1)}} 
\newcommand{\imHup}[1]{\imH{#1}^\uparrow}
\newcommand{\tr}[1]{\mathrm{Tr}\left[#1\right]} 
\newcommand{\ptr}[2]{\mathrm{Tr}_{#1}\left[#2\right]}
\let\inner\relax
\NewDocumentCommand\inner{mg}{%
	\ensuremath{\left\langle #1| \IfNoValueTF{#2}{#1}{#2}\right\rangle}%
}
\let\outer\relax
\NewDocumentCommand\outer{mg}{%
	\ensuremath{\ket{#1}\! \IfNoValueTF{#2}{\bra{#1}}{\bra{#2}}}%
}
\begin{document}

	\title{{Device-independent lower bounds on the conditional von Neumann entropy}}
	\author[1,3]{Peter Brown}
	\email{peter.brown@telecom-paris.fr}
	\orcid{0000-0001-9593-0136}
	\author[2]{Hamza Fawzi}
	\author[1]{Omar Fawzi}
	
	\affil[1]{\small{Univ Lyon, ENS Lyon, UCBL, CNRS,  Inria, LIP, F-69342, Lyon Cedex 07, France}}
	\affil[2]{\small{DAMTP, University of Cambridge, United Kingdom}}
	\affil[3]{\small{Télécom Paris, LTCI, Institut Polytechnique de Paris, 91120 Palaiseau, France}}

	\maketitle

\begin{abstract}
	The rates of several device-independent (DI) protocols, including quantum key-distribution (QKD) and randomness expansion (RE), can be computed via an optimization of the conditional von Neumann entropy over a particular class of quantum states. In this work we introduce a numerical method to compute lower bounds on such rates. We derive a sequence of optimization problems that converge to the conditional von Neumann entropy of systems defined on general separable Hilbert spaces. Using the Navascu\'es-Pironio-Ac\'in hierarchy we can then relax these problems to semidefinite programs, giving a computationally tractable method to compute lower bounds on the rates of DI protocols. Applying our method to compute the rates of DI-RE and DI-QKD protocols we find substantial improvements over all previous numerical techniques, demonstrating significantly higher rates for both DI-RE and DI-QKD. In particular, for DI-QKD we show a minimal detection efficiency threshold which is within the realm of current capabilities. Moreover, we demonstrate that our method is capable of converging rapidly by recovering all known tight analytical bounds up to several decimal places. Finally, we note that our method is compatible with the entropy accumulation theorem and can thus be used to compute rates of finite round protocols and subsequently prove their security. 
\end{abstract}

\section{Introduction}
Quantum cryptography enables certain cryptographic tasks to be performed securely with their security guaranteed by the physical laws of nature as opposed to assumptions of computational hardness that are used in more conventional cryptography. However, standard quantum cryptography protocols, for instance BB84~\cite{BB84}, still require a strong level of trust in the hardware used and its implementation. Faulty hardware or malicious attacks on the implementation can compromise the security of these protocols, rendering them useless~\cite{LWWESM}. Whilst better security analyses and improved hardware checks can make it more difficult for an adversarial party to eavesdrop on the protocol, it is always possible that there are unknown side-channel attacks that remain exploitable.

Fortunately, quantum theory also offers a way to remove strong assumptions on the hardware and implementation by instead running a so-called device-independent protocol. Device-independent protocols offer the pinnacle of security guarantees. By relying on minimal assumptions, they remain secure even when the devices used within the protocol are completely untrusted. The central idea behind many device-independent protocols, including randomness expansion (RE) and quantum key distribution (QKD), is that there are certain correlations between multiple separate systems that ($i$) could only have been produced by entangled quantum systems and $(ii)$ are intrinsically random. Intuitively, these \emph{nonlocal correlations} then act as a certificate that guarantees the systems produced randomness~\cite{MAG}. Furthermore, nonlocal correlations have additional applications such as certifying the dimension of a system~\cite{BPAGMS08}, reducing communication complexity~\cite{BCMW10} and in certain cases nonlocal correlations can even certify the exact systems used (up to some unavoidable symmetries)~\cite{SB20}. 

Beginning with the works of~\cite{MayersYao, Ekert}, significant effort has been placed into developing new device-independent protocols and subsequently proving their security. For tasks like DI-RE and DI-QKD, security is now well understood, with tools like the entropy accumulation theorem (EAT)~\cite{DFR,DF} and quantum probability estimation (QPE)~\cite{QPE} enabling relatively simple proofs of security against all-powerful quantum adversaries~\cite{ADFRV,ARV}. In both cases a security proof can be readily established once one has developed a quantitative relationship between the nonlocal correlations observed in the protocol and the quantity of randomness generated by the devices used. In particular, one is required to understand the \emph{rate} of a protocol, i.e. the number of random bits or key bits that are generated per round. The asymptotic rates of both DI-RE and DI-QKD protocols are quantified in terms of the conditional von Neumann entropy. Moreover, using the EAT, one can also bound the total randomness produced in a finite round protocol in terms of the conditional von Neumann entropy.\footnote{For the QPE one must compute a related quantity, namely construct a quantum estimation factor but we leave such investigations to future work.}

A central problem in device-independent cryptography then remains of how one actually computes this conditional von Neumann entropy? More specifically, one is required to compute (or lower bound) the minimum conditional von Neumann entropy produced by the devices in a single round of the protocol conditioned on the adversary's side information. This computation should be device-independent in the sense that there are no restrictions on what systems were used except for some constraints on the expected correlations produced by the parties' devices, e.g., a Bell-inequality violation. For the purpose of the introduction, we use the following informal notation $\inf_{\rho} H[\rho]$ for this quantity of interest, where the infimum is understood to be over all states that are compatible with the observed correlations, and $H$ refers to the conditional von Neumann entropy. We refer to Section~\ref{sec:DI} for a more formal definition. At first glance this optimization would appear very challenging: the optimization itself is non-convex and furthermore we place no restrictions on the dimensions of the quantum systems used. Nevertheless, in spite of these apparent difficulties, significant progress has still been made towards solving such an optimization problem. For example, when the devices in the protocol have only binary inputs and outputs, it is often possible to exploit Jordan's lemma~\cite{Jordan} in order to restrict the optimization to qubit systems. Using this technique, the work of~\cite{PABGMS} was able to give an exact analytical solution to the optimization problem for the one-sided conditional von Neumann entropy when the devices were constrained by a violation of the CHSH inequality~\cite{CHSH}. More recently, other works have also managed to obtain analytical lower bounds on devices satisfying MABK inequality violations~\cite{GMKB21}, tight bounds bounds for the Holz inequality~\cite{GMKB21b} and tight bounds for the asymmetric CHSH inequality~\cite{WAP20}. These analytical results are difficult to obtain and also rely on the reduction to qubit systems, meaning that it is not possible to perform such an analysis when the inputs and outputs of the devices are not binary. As such, general numerical techniques were also developed that could handle more complicated protocols and general linear constraints on the expected correlations. 

For example, in~\cite{BSS14, NPS14} it was shown that the analogous optimization for the min-entropy~\cite{KRS} can be straightforwardly lower bounded using the Navascu\'es-Pironio-Ac\'in (NPA) hierarchy~\cite{NPA,NPA_general}. As the min-entropy is never larger than the von Neumann entropy the results of these computations give lower bounds on the rates. However, one major drawback of this method is that the resulting lower bounds are in general quite loose. More recently, two other proposals for general numerical approaches have been developed. In~\cite{TSGPL19}, the authors give an explicit method to lower bound the conditional von Neumann entropy via a non-commutative polynomial optimization problem that can then be lower bounded using the NPA hierarchy. In~\cite{BFF21}, the authors introduced new entropies that are all lower bounds on the conditional von Neumann entropy and which can, like the min-entropy, be optimized using the NPA hierarchy. Both works improve upon the min-entropy method but neither has been shown to give tight bounds on the actual rate of a protocol and in general there appears to be significant room for improvement. As such, the question remains as to whether one can give a computationally tractable method to compute tight lower bounds on the rates of protocols.

\subsection{Contributions of this work}
In this work, we introduce a new method to solve this problem. To achieve this, we develop a family of variational expressions indexed by a positive integer $m$ that approximate the conditional von Neumann entropy to arbitrary accuracy as $m$ grows. For any $m$, the variational expression is determined by a noncommutative polynomial $P_m$ and for a state given by the density operator $\rho$ takes the form
\begin{align}
\label{eq:var_expr_intro}
H_m[\rho] := \inf_{Z_1, \dots, Z_n} \: \tr{\rho P_m(Z_1, \dots, Z_n, Z_1^*, \dots, Z_n^*)}  \ ,
\end{align}
where $Z_1, \dots, Z_n$ are bounded operators. For any choice of $m$, this expression is a lower bound on the von Neumann entropy i.e., $H[\rho] \geq H_m[\rho]$, and in particular can be used to lower bound the rates of DI protocols. Such variational expressions are closely related to Kosaki's expressions for the quantum relative entropy~\cite{Kos86,Don86,OP04}. However, Kosaki's variational expression has a number of variables ($n$ in~\eqref{eq:var_expr_intro}) that is infinite and thus it is not suitable for computations. The variational expressions we obtain can be seen as well-chosen finite approximations of Kosaki's formula. Note that we need to be careful in the choice of approximation as we need to obtain expressions which always give lower bounds on the conditional von Neumann entropy.

The point of obtaining expressions of the form~\eqref{eq:var_expr_intro} is that when further taking the infimum over all states $\rho$ that are compatible with the observed correlations, it is possible to relax such an optimization problem to a semidefinite program (SDP) using the NPA hierarchy to yield a computationally tractable lower bound. Moreover, we show that for any fixed $m$, the resulting noncommutative polynomial optimization problem can be made to satisfy the property that all the variables have a fixed upper bound on their operator norm. As a consequence, we obtain that the sequence of SDPs given by the NPA hierarchy converges in the limit to $\inf_{\rho} H_m[\rho]$ where $\rho$ is compatible with the observed statistics. 
We also remark that unlike previous numerical techniques that were developed, our analysis applies when the systems are also infinite dimensional.

We then apply this to compute the rates of both DI-RE and DI-QKD protocols. Compared to the other numerical techniques we show significant improvements on the calculated rates. We give improved bounds on the DI randomness certifiable with qubit systems which could be used to yield more efficient experiments for DI randomness~\cite{diexperiment1,diexperiment2}. In addition we also give new lower bounds on the minimal detection efficiency required to perform DI-QKD with qubit systems. This gives a promising approach to conduct DI-QKD experiments with current technologies. We also find that in practice our method can converge quickly as we demonstrate that we can recover (up to several decimal places) tight analytical bounds from~\cite{PABGMS},~\cite{GMKB21b} and~\cite{WAP20}. We also remark that the computations we ran in several of our examples were also vastly more efficient than the two numerical approaches of~\cite{TSGPL19, BFF21}. Finally, we also explain how our technique can be used directly with the entropy accumulation theorem in order to compute non-asymptotic rates of protocols and subsequently prove their security.

The article is structured as follows. In Section~\ref{sec:results} we begin by stating the main results relevant to device-independent cryptography and their application. Note that in this section we will restrict to a special case of our more general result that is sufficient for those interested solely in applications. Later in Section~\ref{sec:methods} we will state the general technical result along with its proof. 

\section{Results}\label{sec:results}

Before we begin let us establish some notation. Let $H$ be a separable Hilbert space, we denote the set of bounded operators on $H$ to itself by $B(H)$. A \emph{state} on $H$ is a positive semidefinite, trace-class operator $\rho$ such that $\tr{\rho} = 1$. We denote the set of states on $H$ by $D(H)$. Given two positive semidefinite operators $\rho$, $\sigma$ on $H$ we write $\rho \ll \sigma$ if $\ker{\sigma}  \subseteq \ker{\rho}$ where $\ker X := \{\ket{v} \in H : X \ket{v} = 0\}$. Throughout this work $\RR_+$ will denote the set of nonnegative real numbers, $\NN$ will denote the set of strictly positive integers, $\ln$ will denote the natural logarithm and $\log_2$ will denote the logarithm base two. In this section, for simplicity of presentation, we assume that $H$ is finite-dimensional. See Section~\ref{sec:methods} for the setting where $H$ is infinite-dimensional.

\subsection{Converging upper bounds on the relative entropy}

We define the relative entropy between two positive semidefinite operators $\rho$ and $\sigma$ as 
\begin{equation}
	D(\rho\|\sigma) := \tr{\rho \left(\log_2 \rho - \log_2 \sigma\right)}
\end{equation}
whenever $\rho \ll \sigma$ and $+\infty$ otherwise. 
Note that this is equivalent to the definition 
\begin{equation}\label{eq:Ddef2}
	D(\rho\|\sigma) = \sum_{i,j} y_i \log_2(y_i / x_j) |\!\inner{\psi_i}{\phi_j}\!|^2,
\end{equation}
where $\{\ket{\psi_i}, y_i\}_i$ are an orthonormal basis of eigenvectors with their corresponding eigenvalues for the operator $\rho$ and similarly $\{\ket{\phi_j}, x_j\}_j$ for $\sigma$. 
 For a bipartite state $\rho_{AB}$ on a  Hilbert space $H_A \otimes H_B$ we define the conditional von Neumann entropy as 
\begin{equation}
	H(A|B) := - D(\rho_{AB} \| \id_A \otimes \rho_B)\,.
\end{equation}

The main technical result of this work is the following theorem. Note that a more general version of this theorem, stated for von Neumann algebras is given in Theorem~\ref{thm:D_upper_bounds}.
\begin{theorem}\label{thm:D_upper_bounds_simp}
	Let $H$ be a finite-dimensional Hilbert space, $\rho, \sigma$ be two positive semidefinite operators on $H$ 
	and assume $\lambda > 0$ is such that $\rho \leq \lambda \sigma$.
	Then for any $m \in \NN$ there exists a choice of $t_1,\dots t_m \in (0,1]$ and $w_1,\dots,w_m > 0$, such that
\begin{equation}\label{eq:D_ub_boundedops}
	\begin{aligned}
	D(\rho\|\sigma) \leq - c_m\!-\!\sum_{i=1}^{m-1} \frac{w_i}{t_i \ln 2}\,\inf_{Z}& \, \tr{\rho (Z+Z^*+ (1-t_i)Z^* Z} + t_i \tr{\sigma Z Z^*} \ \\
	\mathrm{s.t.}& \quad  \|Z\|\leq \frac{3}{2} \max\left\{\frac{1}{t_i}, \frac{\lambda}{1-t_i}\right\}
	\end{aligned}
\end{equation}
where $c_m = \tr{\rho}(\sum_{i=1}^m \frac{w_i}{t_i \ln 2} -\frac{\lambda}{m^2 \ln 2})$. 
Moreover, the RHS converges to $D(\rho \|\sigma)$ as $m \to \infty$.
\end{theorem}

\begin{remark}
	The constants $t_i$ and $w_i$ appearing in Theorem~\ref{thm:D_upper_bounds_simp} are the nodes and weights of a Gauss-Radau quadrature rule over $[0,1]$ with endpoint $t_m = 1$. Importantly they are efficient to compute and we refer the reader to Sec.~\ref{sec:approx_log} for further details as well as~\cite{oisinpaper} for a more general treatment.
\end{remark}

The above theorem provides a convergent sequence of upper bounds on the relative entropy in the form of an optimization problem. This optimization has several features that are crucial to the applications we will now see. In particular it has an objective function that is linear in the operators $\rho$ and $\sigma$ and the form of the optimization does not change with the dimension. In the following subsection we will show how to turn these upper bounds on the relative entropy into semidefinite programming lower bounds on the rates of DI protocols. 

\subsection{Application to device-independent cryptography}
\label{sec:DI}

For the device-independent setup we consider two honest parties\footnote{Our work extends straightforwardly to more parties, e.g., see the multipartite results in Figure~\ref{fig:holz_comparison}. However, for brevity and clarity we will focus on two parties in our exposition.}, Alice and Bob, and an adversary Eve. Alice and Bob each have access to a black-box device which they can give inputs to and receive outputs from. We shall denote the inputs of Alice and Bob by $X$ and $Y$ respectively. Similarly we denote the outputs of Alice and Bob by $A$ and $B$ respectively. All inputs and outputs come from finite sets. We refer to a setup wherein Alice's device has $n_A$ inputs and $m_A$ outputs and Bob's device has $n_B$ inputs and $m_B$ outputs as a \emph{$n_An_Bm_Am_B$-scenario}. A \emph{round} consists of Alice and Bob performing the following: (1) they shield\footnote{This shielding could be enforced via spacelike separation for example.} their devices so that they cannot communicate with one another; (2) Alice and Bob each provide their respective device with an input, selected using some fixed probability distribution $p(x,y)$; (3) they each receive an output from their respective device. In a device-independent protocol such as DI-QKD or DI-RE the statistics of many rounds will be analyzed in order to determine whether or not randomness is being produced from the devices or whether a secret key can be distilled. 

We assume that the devices are constrained by quantum theory and that they act in the following way. Let $Q_A$, $Q_B$ and $Q_E$ be the three separable Hilbert spaces of Alice's device, Bob's device and Eve's device respectively. At the beginning of each round a tripartite state $\rho_{Q_AQ_BQ_E}$ on $Q_A \otimes Q_B \otimes Q_E$ is shared between the three systems. In response to an input, Alice and Bob's devices will measure some preselected POVMs $\{\{M_{a|x}\}_a\}_x$, $\{\{N_{b|y}\}_b\}_y$ on the parts of the state that they received and return the measurement outcome. The state and measurements are unknown to Alice and Bob but may be known by Eve. Overall, the joint conditional distribution of their outputs may be described via the Born rule as
\begin{equation}
p(a,b|x,y) = \tr{\rho_{Q_AQ_BQ_E} (M_{a|x} \otimes N_{b|y} \otimes I)}.
\end{equation}
Immediately after the round, the joint state between the classical information recorded by the honest parties and Eve's quantum system may be described by the cq-state
\begin{equation}\label{eq:pms}
	\rho_{ABXYQ_E} = \sum_{abxy} p(x,y) \outer{abxy} \otimes \rho_{Q_E}^{abxy},
\end{equation}
where 
\begin{equation}
\rho_{Q_E}^{abxy} = \ptr{Q_AQ_B}{\rho_{Q_AQ_BQ_E} (M_{a|x} \otimes N_{b|y} \otimes I)}\,.
\end{equation}
In spot-checking protocols, secret key and randomness are only extracted from rounds with a particular input~\cite{MS2}. As such, we henceforth consider some distinguished inputs $(x^*,y^*)$ and the post-measurement state $\rho_{ABQ_E} = \sum_{ab} \outer{ab} \otimes \rho_{Q_E}^{abx^*y^*}$. The exact choices of $(x^*, y^*)$ will be made explicit when we compute rates for given protocols. For a spot-checking DI-RE protocol, the asymptotic rate is characterized by 
\begin{equation}
	H(AB|X=x^*, Y=y^*, Q_E)
\end{equation}
if the randomness is extracted from both Alice and Bob's outputs and
\begin{equation}
	H(A|X=x^*, Q_E)
\end{equation}
if the randomness is only extracted from the outputs of Alice's device. 
For a spot-checking DI-QKD protocol with one-way error correction, the asymptotic rate is given in terms of the Devetak-Winter bound~\cite{DW05}
\begin{equation}
	H(A|X=x^*, Q_E) - H(A|B,X=x^*,Y=y^*).
\end{equation}

To obtain the asymptotic\footnote{For protocols with a finite number of rounds a more careful analysis of the rate is required. However general techniques like the entropy accumulation theorem~\cite{DFR,DF} allow us to lift the asymptotic rate to the rate of a protocol with a finite number of rounds.} rate of the device-independent protocols we must consider a minimization of the above quantities over all possible quantum systems that could have produced the statistics we observed on expectation. More formally, let $\cA, \cB, \cX, \cY$ denote the finite sets from which the random variables $A,B,X,Y$ take their values. Then given a finite set $\cC$ let $C : \cA\cB\cX\cY \rightarrow \cC$ be some function which will act as a statistical test on $ABXY$. Finally, we consider a probability distribution $q(c)$ on $\cC$. We then say a distribution $p(a,b,x,y)$ on $ABXY$  is \emph{compatible} with the pair $(C, q)$ if
\begin{equation}
	\sum_{(a,b,x,y): \,C(a,b,x,y) = c} p(a,b,x,y) = q(c).
\end{equation}
In other words, when we apply our statistical test $C$ to the random variables $ABXY$ we obtain a new random variable whose distribution is $q$.
For example, let $\cA = \cB = \cX = \cY =\cC = \{0,1\}$,  
\begin{equation}
C(a,b,x,y) = \begin{cases}
	1 \qquad \text{if } a \oplus b = x y  \\
	0 \qquad \text{otherwise }
\end{cases}
\end{equation}
and let $q(1) = \omega$. Then for $p(x,y) = 1/4$ the pair $(C,q)$ imposes the constraint that the distribution $p(a,b,x,y)$ should achieve an expected score of $\omega$ in the CHSH game.

More generally we say that a tuple $(Q_AQ_BQ_E, \rho, \{M_{a|x}\}, \{N_{b|y}\})$ is \emph{compatible} with the constraints $(C,q)$ if the probability distribution $p(a,b,x,y) = p(x,y)\tr{\rho (M_{a|x} \otimes N_{b|y} \otimes \id)}$ is compatible with $(C,q)$. We refer to such a tuple as a \emph{strategy}. For each strategy we can also associate a post-measurement state via~\eqref{eq:pms}. Then for a fixed $(C,q)$ the \emph{device-independent} rate of randomness produced by Alice's device on input $X=x^*$, if the devices satisfy the constraints imposed by $(C,q)$, is given by 
\begin{equation}\label{eq:rate_optimization}
	\inf H(A|X=x^*, Q_E)
\end{equation}
where the infimum is taken over all strategies compatible with $(C,q)$ and the conditional von Neumann entropy is evaluated on the post-measurement state induced by the strategy. The rates for the randomness produced by both devices and the device-independent Devetak-Winter rate can be defined analogously by replacing the objective function with the appropriate quantity. Note that by considering appropriate dilations we can restrict the optimization to strategies wherein the measurements are projective and the state $\rho_{Q_AQ_BQ_E}$ is pure. 

The following lemma demonstrates how to use the upper bounds on the relative entropy from Theorem~\ref{thm:D_upper_bounds_simp} in order to lower bound this infimum by a converging sequence of optimizations that can be subsequently lower bounded using the NPA hierarchy.
\begin{lemma}\label{lem:di_rewriting}
	Let $m \in \NN$ and let $t_1,\dots,t_m$ and $w_1,\dots,w_m$ be the nodes and weights of an $m$-point Gauss-Radau quadrature on $[0,1]$ with an endpoint $t_m = 1$, as specified in Theorem~\ref{thm:gauss-radau-quad}. Let $\rho_{Q_AQ_BQ_E}$ be the initial quantum state shared between the devices of Alice, Bob and Eve and let $\{M_{a|x^*}\}$ denote the measurements operators performed by Alice's device in response to the input $X=x^*$. Furthermore for $i = 1, \dots, m-1$ let $\alpha_i = \tfrac32\max\{\tfrac{1}{t_i}, \tfrac{1}{1-t_i}\}$. Then $H(A|X=x^*, Q_E)$ is never smaller than
	\begin{equation}\label{eq:H_lower_bound}
		\begin{aligned}
			c_m + \sum_{i=1}^{m-1} \frac{w_i}{t_i \ln 2} \sum_a \inf_{Z_a \in B(Q_E)}&  \tr{\rho_{Q_AQ_E}\bigg(M_{a|x^*} \otimes (Z_a + Z_a^* + (1-t_i) Z_a^* Z_a) + t_i (\id_{Q_A} \otimes Z_a Z_a^*)\bigg)} \\
			\mathrm{s.t.}& \quad \|Z_a\| \leq \alpha_i
		\end{aligned}
	\end{equation}
	where $c_m = \sum_{i=1}^{m-1} \frac{w_i}{t_i \ln 2}$. Moreover these lower bounds converge to $H(A|X=x^*, Q_E)$ as $m \to \infty$.
	\begin{proof}
		Let $\rho_{AQ_E} = \sum_{a} \outer{a} \otimes \rho_{Q_E}(a,x^*)$ be the cq-state after Alice has performed her measurement corresponding to the input $x^*$, i.e., $\rho_{Q_E}(a,x^*) = \ptr{Q_AQ_B}{\rho_{Q_AQ_BQ_E} (M_{a|x^*} \otimes \id \otimes \id)}$. Then for any $m \in \NN$, using the relation $H(A|X=x^*, Q_E) = - D(\rho_{AQ_E}\|\id_A\otimes \rho_{Q_E})$ and Theorem~\ref{thm:D_upper_bounds_simp} we have that $H(A|X=x^*, Q_E)$ is never smaller than
		\begin{equation}
\begin{aligned}
			 c_m + \sum_{i=1}^{m-1} \frac{w_i}{t_i \ln 2}\,&\inf_{Z \in B(H)}  \tr{\rho_{AQ_E} (Z+Z^* + (1-t_i) Z^* Z)} + t_i \tr{(\id_A \otimes\rho_{Q_E}) Z Z^*}  \\
			& \quad\mathrm{s.t.} \quad \| Z \| \leq \frac{3}{2} \max \left\{\frac{1}{t_i}, \frac{\lambda}{1-t_i}\right\}	
\end{aligned}	
	\end{equation}
		where $c_m = - \frac{\lambda}{m^2 \ln 2} + \sum_{i=1}^m \frac{w_i}{t_i \ln 2}$ and $\lambda$ is some real number such that $\rho_{AQ_E} \leq \lambda \id_A \otimes \rho_{Q_E}$. Now as system $A$ is finite dimensional, we can write the operator $Z = \sum_{ab} \outer{a}{b} \otimes Z_{(a,b)}$ for some operators $Z_{(a,b)} \in B(Q_E)$. Then for the first term we have
		\begin{equation}
		\begin{aligned}
			\tr{\rho_{AQ_E} (Z + Z^*)} &= \sum_a \tr{\rho_{Q_E}(a,x^*)(Z_{(a,a)} + Z_{(a,a)}^*)} \\
			&= \sum_a \tr{\ptr{Q_A}{\rho_{Q_AQ_E} (M_{a|x^*} \otimes \id_{Q_E})}(Z_{(a,a)} + Z_{(a,a)}^*)} \\
			&= \sum_a \tr{\rho_{Q_AQ_E} (M_{a|x^*} \otimes (Z_{(a,a)} + Z_{(a,a)}^*))}			
		\end{aligned}
		\end{equation}
		where on the first line we traced out the $A$ system, on the second line we substituted in the definition of $\rho_{AQ_E}(a,x^*)$ and on the third line we used the identity $\ptr{A}{X_{AB} (\id \otimes Y_B)} = \ptr{A}{X_{AB}}Y_B$. Repeating this for the second term we find
		\begin{equation}
		\begin{aligned}
			\tr{\rho_{AQ_E} Z^*Z} &= \sum_{ab} \tr{\rho_{Q_E}(a,x^*) Z^*_{(b,a)} Z_{(b,a)}} \\
			&\geq \sum_a  \tr{\rho_{Q_E}(a,x^*) Z^*_{(a,a)} Z_{(a,a)}} \\
			&= \sum_a \tr{\rho_{Q_AQ_E} (M_{a|x^*} \otimes Z_{(a,a)}^*Z_{(a,a)})}
		\end{aligned}
		\end{equation}
		where on the second line we noted that $\sum_b Z_{(b,a)}^* Z_{(b,a)} \geq Z_{(a,a)}^* Z_{(a,a)}$. Finally we get a similar relation for the final term
		\begin{equation}
		\begin{aligned}
			\tr{(\id_A \otimes \rho_{Q_E}) Z Z^*} \geq \sum_{a} \tr{\rho_{Q_AQ_E} (\id_{Q_A} \otimes Z_{(a,a)} Z^*_{(a,a)})}.
		\end{aligned}
		\end{equation}
		Inserting these three rewritings into the lower bound on $H(A|X=x^*, Q_E)$ and relabelling $Z_{(a,a)}$ to $Z_a$ we recover the objective function stated in the lemma. Note that the above rewritings and the fact that we are minimizing implies that we need only consider operators $Z$ that are block diagonal in the sense that $Z = \sum_a \outer{a} \otimes Z_{(a,a)}$.
		
		As $\rho_{AQ_E}$ is a cq-state we have $\rho_{AQ_E} \leq \id_A \otimes \rho_{Q_E}$ and we can set $\lambda = 1$. This recovers the form of $c_m$ stated in the lemma (noting that $w_m = \frac{1}{m^2}$ and $t_m=1$). As it is sufficient to take $Z= \sum_a \outer{a} \otimes Z_a$, we must have $\alpha_i = \frac{3}{2} \max \left\{\frac{1}{t_i}, \frac{\lambda}{1-t_i}\right\} \geq \|\sum_a \outer{a} \otimes Z_a\| = \max_a \|Z_a\|$. 
		
		Finally the convergence statement follows immediately from the convergence proven in Theorem~\ref{thm:D_upper_bounds}.
	\end{proof}
\end{lemma}

\begin{remark}[Adapting to other entropies]
	The above lemma only describes a bound for $H(A|X= x^*,Q_E)$. However the proof can be easily adapted to the case of the global entropy $H(AB|X=x^*, Y=y^*, Q_E)$ or for non-fixed inputs, e.g., $H(A|XQ_E)$. For example, the global entropy $H(AB|X=x^*, Y=y^*, Q_E)$ can be lower bounded by replacing the inner summation in~\eqref{eq:H_lower_bound} with 
	\begin{equation}
		\sum_{ab} \inf_{Z_{ab} \in B(Q_E)} \tr{\rho_{Q_AQ_BQ_E}\bigg(M_{a|x^*}  \otimes N_{b|y^*} \otimes (Z_{ab} + Z_{ab}^* + (1-t_i) Z_{ab}^* Z_{ab}) + t_i (\id_{Q_AQ_B} \otimes Z_{ab} Z_{ab}^*)\bigg)}.
	\end{equation}
	Similarly one could also adapt the proof to bound $H(A|XQ_E)$, allowing the entropy to averaged over the inputs $X$. For this one would replace the inner summation with
	\begin{equation}
		\sum_{ax} \inf_{Z_{ax} \in B(Q_E)} p(x) \tr{\rho_{Q_AQ_E} \bigg( M_{a|x} \otimes (Z_{ax} + Z_{ax}^* + (1-t_i) Z_{ax}^* Z_{ax})  + t_i (\id_{Q_A} \otimes Z_{ax}Z_{ax}^*) \bigg)}.
	\end{equation}
\end{remark}

The above lemma and remark provide a converging sequence of lower bounds on the conditional von Neumann entropy. In order to turn these into lower bounds on the rate of a device-independent protocol we must also include the optimizations of all states, measurements and Hilbert spaces, subject to any constraints on the devices' joint probability distribution that we wish to impose. Suppose for $1 \leq j \leq r$ for some $r \in \NN$ we impose on Alice and Bob's devices a collection of constraints 
\begin{equation}
\sum_{abxy} c_{abxyj}p(a,b|x,y) \geq v_j 
\end{equation}
where $c_{abxyj}, v_j \in \RR$. Then using Lemma~\ref{lem:di_rewriting} we can compute a lower bound on $H(A|X=x^*, Q_E)$ for all possible devices that satisfy the above constraints by solving the following optimization problem:
\begin{equation}\label{eq:full_di_optimization_tensor}
	\begin{aligned}
		c_m + &\inf \quad  \sum_{i=1}^{m-1} \frac{w_i}{t_i \ln 2} \sum_a \tr{\rho_{Q_AQ_BQ_E}\bigg(M_{a|x^*} \otimes \id_{Q_B}  \otimes (Z_{a,i} + Z_{a,i}^* + (1-t_i) Z_{a,i}^* Z_{a,i}) + t_i (\id_{Q_AQ_B} \otimes Z_{a,i} Z_{a,i}^*)\bigg)} \\
		&\mathrm{s.t.} \quad \sum_{abxy} c_{abxyj} \tr{\rho_{Q_AQ_BQ_E}(M_{a|x} \otimes N_{b|y} \otimes \id)} \geq v_j \hspace{3cm} \text{for all } 1\leq j \leq r  \\
		&\qquad \quad \sum_a M_{a|x} = \sum_b N_{b|y} = \id \hspace{6.35cm} \text{for all } x,y\\ 
		&\qquad \quad M_{a|x} \geq 0 \hspace{8.65cm}\text{for all } a,x \\
		&\qquad \quad N_{b|y} \geq 0 \hspace{8.75cm}\text{for all } b,y \\
		&\qquad \quad \|Z_{a,i}\| \leq \alpha_i \hspace{8.3cm}\text{for all }a, i=1,\dots,m-1 \\
		&\qquad \quad M_{a|x} \in B(Q_A), \quad N_{b|y}\in B(Q_B),\quad Z_{a,i} \in B(Q_E) \hspace{2.55cm}\text{for all } a,b,x,y,i\\
		&\qquad \quad \rho_{Q_AQ_BQ_E} \in D(Q_AQ_BQ_E)
	\end{aligned}
\end{equation}
where the infimum is taken over the all collections $(Q_AQ_BQ_E ,\rho_{Q_AQ_BQ_E}, \{M_{a|x}\}, \{N_{b|y}\}, \{Z_{a,i}\})$ satisfying the constraints of the problem. Note that when rearranging the objective function we have used the fact that the inner summation in~\eqref{eq:H_lower_bound} commutes with the infimum over the $Z_a$ operators. We can then further pull the infimum outside the outer summation by reparametrizing the variables as $Z_{a,i}$ for each $i$ in the outer sum. In order to compute a lower bound on~\eqref{eq:full_di_optimization_tensor} we employ the NPA hierarchy to relax this problem to an SDP. To do this we first drop the tensor product structure and instead include commutation relations on the relevant variables of the problem. In doing so we end up with the following noncommutative polynomial optimization problem which gives a lower bound on \eqref{eq:full_di_optimization_tensor}
\begin{equation}\label{eq:full_di_optimization_commuting}
	\begin{aligned}
		c_m + &\inf \quad \sum_{i=1}^{m-1} \frac{w_i}{t_i \ln 2}  \sum_a  \bra{\psi}M_{a|x^*} (Z_{a,i} + Z_{a,i}^* + (1-t_i) Z_{a,i}^* Z_{a,i}) + t_i Z_{a,i} Z_{a,i}^* \ket{\psi} &\\
		&\mathrm{s.t.} \quad \sum_{abxy} c_{abxyj} \bra{\psi} M_{a|x} N_{b|y} \ket{\psi} \geq v_j \hspace{3.8cm} \text{for all } 1\leq j \leq r  \\
		&\qquad \quad \sum_a M_{a|x} = \sum_b N_{b|y} = \id \hspace{4.6cm} \text{for all } x,y\\ 
		&\qquad \quad M_{a|x} \geq 0 \hspace{6.9cm}\text{for all } a,x \\
		&\qquad \quad N_{b|y} \geq 0 \hspace{7.cm}\text{for all } b,y \\
		&\qquad \quad Z_{a,i}^* Z_{a,i} \leq \alpha_i^2 \hspace{6.3cm}\text{for all }a, i=1,\dots,m-1 \\
		&\qquad \quad Z_{a,i} Z_{a,i}^* \leq \alpha_i^2 \hspace{6.3cm}\text{for all }a, i=1,\dots,m-1 \\
		&\qquad \quad [M_{a|x}, N_{b|y}] = [M_{a|x}, Z_{b,i}^{(*)}] = [N_{b|y}, Z_{a,i}^{(*)}] = 0 \hspace{1.55cm} \text{for all } a,b,x,y,i \\
		&\qquad \quad M_{a|x}, N_{b|y}, Z_{a,i} \in B(H) \hspace{4.7cm} \text{for all } a,b,x,y,i
	\end{aligned}
\end{equation}
where we have recalled that it is sufficient to consider pure states. Note that in both of the above optimizations we can also include projective measurement constraints without loss of generality. Using the NPA hierarchy we can then relax this optimization to a sequence of SDPs that give us a converging sequence of lower bounds on the optimal value and in turn a lower bound on the rate of the protocol. 

\begin{remark}[Commuting operator versus tensor product strategies]
	It is immediate that \eqref{eq:full_di_optimization_commuting} is never larger than \eqref{eq:full_di_optimization_tensor} and thus our subsequent relaxations of \eqref{eq:full_di_optimization_commuting} will always give lower bounds on the rates of protocols as we defined them previously using the tensor product framework. However, due to recent work~\cite{JNVY20} it may not be the case that \eqref{eq:full_di_optimization_tensor} and \eqref{eq:full_di_optimization_commuting} are equal.
	
	Let $R_m$ be the optimal value in \eqref{eq:full_di_optimization_commuting} and $r_{m,k}$ be the optimal value of its $k^{\mathrm{th}}$ level NPA relaxation. As we have explicit bounds on the operator norms of the variables we know that our NPA relaxations of \eqref{eq:full_di_optimization_commuting} will converge to the optimal value of \eqref{eq:full_di_optimization_commuting}, i.e., $R_m = \lim_{k \to \infty} r_{m,k}$. We believe that $\sup_m R_m$ will correspond to infimum of the conditional von Neumann entropy for commuting operator strategies. However proving this would require one to formally define \emph{commuting operator strategies} (analogous to the tensor product strategies introduced earlier) and checking that \eqref{eq:full_di_optimization_commuting} can be derived in a similar fashion to how~\eqref{eq:full_di_optimization_tensor} was derived. We leave this to future work. 
\end{remark}

\begin{remark}[Faster lower bounds]\label{rem:faster_computations}
	There are several ways to speed up the SDP relaxations of~\eqref{eq:full_di_optimization_commuting}. We will also note in the caption of each figure which speedups were used.
	\begin{enumerate}
		\item Often including the operator inequalities $Z_{a,i}^* Z_{a,i} \leq \alpha_i^2$ and $Z_{a,i}Z_{a,i}^* \leq \alpha_i^2$ does not improve the lower bound. Hence we always removed them when performing the computations. 
		\item The choice of monomial indexing set for the moment matrices can greatly affect the accuracy and speed of the computations. We found that the local level $1$ set, i.e., monomials of the form $ABZ$ where $A \in \{\id\}\cup\{M_{a|x}\}_{a,x}$, $B \in \{\id\}\cup\{N_{b|y}\}_{b,y}$ and $Z \in \{\id\} \cup \{Z_{c,i}, Z_{c,i}^*\}_{c,i}$ performed particularly well.
		\item It is also possible to commute the outer summation and the infimum, i.e., compute
		\begin{equation}
		\sum_{i=1}^{m-1} \frac{w_i}{t_i \ln 2}  \inf \sum_a  \bra{\psi}M_{a|x^*} (Z_{a,i} + Z_{a,i}^* + (1-t_i) Z_{a,i}^* Z_{a,i}) + t_i Z_{a,i} Z_{a,i}^* \ket{\psi}.
		\end{equation}
	 This can only decrease the lower bound on the entropy and hence it is sufficient for the purpose of lower bounding the rate of a protocol. The main advantage of doing so is that it drastically reduces the number of variables in the NPA hierarchy relaxations. Rather than running a single SDP with a $Z_{a,i}$ variable for each value of $a$ and $i$, we can instead run $m$ much smaller SDPs with only a $Z_{a,i}$ variable for each $a$. This significantly reduces the runtime of the SDPs and results in the runtime scaling linearly with the number of nodes in the Gauss-Radau quadrature. However we did notice in certain cases that the lower bounds computed in this manner were not converging to tight lower bounds. In such cases we did not include this speedup.
		\item The moment matrix of the NPA relaxation can without loss of generality be taken to be a real symmetric matrix. 
	\end{enumerate}
\end{remark}

\subsection{Numerical results}
We will now apply our method of computing rates to several DI-RE and DI-QKD scenarios and compare our technique with known analytical results~\cite{PABGMS, GMKB21b, WAP20} and other numerical techniques~\cite{TSGPL19, BFF21, BRC21}. We will only concern ourselves with the asymptotic rates in this work. However, as noted earlier, our technique can be combined with the entropy accumulation theorem in a relatively straightforward manner, similar to~\cite{BRC}, and thus one could also use it to compute finite round rates for protocols. We discuss this further in Appendix~\ref{app:eat}. The semidefinite relaxations were generated using the python package NCPOL2SDPA~\cite{ncpol2sdpa} and the resulting SDPs were solved using MOSEK~\cite{mosek}. As NCPOL2SDPA is no longer maintained, we used a maintained fork of the original package~\cite{ncpol2sdpa-fork}. We also provide example python scripts that implement some of the computations, these can be found at the github repository~\cite{github_dirates}.

\subsubsection{Randomness from the CHSH game}

To begin we consider the simplest possible setting of bounding $H(A|X=0,Q_E)$ when Alice and Bob's devices are constrained to achieve some minimal score in the CHSH game. In Figure~\ref{fig:chsh_comparison} we demonstrate how our bounds improve as we increase the number of nodes $m$ in the Gauss-Radau quadrature. We compare this with a known tight analytical bound in this setting~\cite{PABGMS}. In the figure we see that for $m=8$ our numerical technique effectively recovers the known tight analytical bound. As far as we are aware this is the first numerical technique to do so without resorting to algebraic simplifications afforded by Jordan's lemma. Importantly, this also demonstrates that in certain settings our technique can converge very quickly in the NPA hierarchy and in the size of the Gauss-Radau quadrature. Furthermore, as we are able to run our computations at low levels of the NPA hierarchy, the computations are also relatively fast for this setting with each SDP taking less than a second to run. We include additional plots demonstrating the convergence of our technique for other known tight analytical bounds~\cite{GMKB21b,WAP20} in Appendix~\ref{app:additional_plots}. These include multipartite scenarios useful for bounding the rates of DI conference key agreement protocols~\cite{MGKB20}.

\begin{figure}
	\centering
	\definecolor{mycolor2}{rgb}{0.00000,0.44700,0.74100}%
	\definecolor{mycolor4}{rgb}{0.85000,0.32500,0.09800}%
	\definecolor{mycolor3}{rgb}{0.92900,0.69400,0.12500}%
	\definecolor{mycolor1}{rgb}{0.49400,0.18400,0.55600}%
	\begin{tikzpicture}
		
		\begin{axis}[%
			width=4in,
			height=2.8in,
			scale only axis,
			xmin=0.75,
			xmax=0.854,
			ymin=0,
			ymax=1,
			grid=major,
			xlabel={CHSH score},
			ylabel={Bits},
			xtick={0.75, 0.77, 0.79, 0.81, 0.83, 0.85},
			axis background/.style={fill=white},
			legend style={at={(0.5,0.95)},legend cell align=left, align=left, draw=white!15!black}
			]
			\addplot[color=mycolor4, line width=0.9pt, mark=*, mark repeat=2, only marks] table[col sep=comma] {chsh_analytic_tikz.dat};
			\addlegendentry{$H(A|X=0,Q_E)$ analytic}	
			\addplot[color=mycolor1, line width=0.9pt] table[col sep=comma] {kosaki_chsh_local_2M_tikz.dat};
			\addlegendentry{Our technique $m=2$}
			\addplot[smooth, color=mycolor3, line width=0.9pt] table[col sep=comma] {kosaki_chsh_local_4M_tikz.dat};
			\addlegendentry{Our technique $m = 4$}
			\addplot[color=mycolor2, line width=0.9pt] table[col sep=comma] {kosaki_chsh_local_8M_tikz.dat};
			\addlegendentry{Our technique $m = 8$}	
		\end{axis}
	\end{tikzpicture}%
	\caption{\textbf{Recovering the local CHSH bound.} Comparison of lower bounds on $H(A|X=0,Q_E)$ for quantum devices that constrained to achieve some minimal CHSH score. Numerical bounds were computed using speedups (1), (3) and (4) from Remark~\ref{rem:faster_computations} at a relaxation level $2 + ABZ + AZZ$. A single SDP took less than one second to run.}
	
	\label{fig:chsh_comparison}
\end{figure}
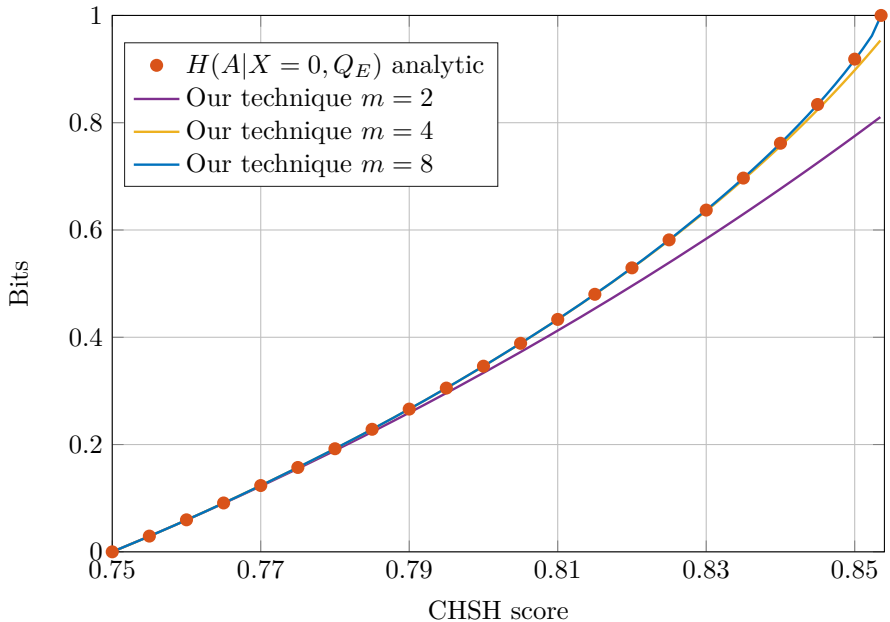  

For randomness expansion protocols it is beneficial to consider the randomness generated from both devices. Therefore, in Figure~\ref{fig:chsh_global_comparison} we plot lower bounds on $H(AB|X=0,Y=0,Q_E)$. We compare this again with the analytical bound on $H(A|X=0,Q_E)$ and we also compare it with the numerical technique of~\cite{BFF21} and a known upper bound from~\cite{BRC21}. As is clear from the figure, the global entropy can be significantly larger than the local entropy leading to much better rates for randomness expansion protocols based on the CHSH game. In comparison with the numerical technique of~\cite{BFF21}, we see that our new method vastly outperforms it. In the plot we also compare with some numerical upper bounds from~\cite{BRC21} which used Jordan's lemma to reduce the problem to qubits and then optimized over explicit qubit strategies. As we see from the plot our rate curve almost coincides with this upper bound and we expect that by increasing $m$ further the gap would be further reduced. Note that we could also have used speedup (3) from Remark~\ref{rem:faster_computations} in this case which would change the runtime from hours to minutes. However, we found that when using speedup (3) we were unable to recover the bound from~\cite{BRC21} and so we elected to not use it.

\begin{figure}
	\centering
	\definecolor{mycolor2}{rgb}{0.00000,0.44700,0.74100}%
	\definecolor{mycolor4}{rgb}{0.85000,0.32500,0.09800}%
	\definecolor{mycolor3}{rgb}{0.92900,0.69400,0.12500}%
	\definecolor{mycolor1}{rgb}{0.49400,0.18400,0.55600}%
	\begin{tikzpicture}
		
		\begin{axis}[%
			width=4in,
			height=2.8in,
			scale only axis,
			xmin=0.75,
			xmax=0.854,
			ymin=0,
			ymax=1.61,
			grid=major,
			xlabel={CHSH score},
			ylabel={Bits},
			xtick={0.75, 0.77, 0.79, 0.81, 0.83, 0.85},
			axis background/.style={fill=white},
			legend style={at={(0.57,0.95)},legend cell align=left, align=left, draw=white!15!black}
			]
			\addplot[color=mycolor2, line width=1.2pt] table[col sep=comma] {chsh_global_8M_full_tikz.dat};
			\addlegendentry{Our technique $m = 8$}
			\addplot[color=mycolor4, line width=1.2pt] table[col sep=comma] {chsh_analytic_tikz.dat};
			\addlegendentry{$H(A|X=0,Q_E)$ analytic}	
			\addplot[color=mycolor3, line width=1.2pt] table[col sep=comma] {h43_global_chsh_tikz.dat};
			\addlegendentry{$H^{\uparrow}_{(4/3)}(AB|X=0,Y=0,Q_E)$}
			\addplot[line width=1pt, dashed] table[col sep=comma] {roger_global_chsh_tikz.dat};
			\addlegendentry{Upper bound}	
		\end{axis}
	\end{tikzpicture}%
	\caption{\textbf{Global randomness from the CHSH game.} Comparison of lower bounds on $H(AB|X=0,Y=0,Q_E)$ for quantum devices that constrained to achieve some minimal CHSH score. Our technique was computed using speedups (1) and (2) from Remark~\ref{rem:faster_computations}.  For $m=8$ a single data point can take hours to run. We also compare with the iterated mean entropy $\imHup{4/3}(AB|X=0,Y=0,Q_E)$ from~\cite{BFF21} and an upper bound from~\cite{BRC21}.}
	
	\label{fig:chsh_global_comparison}
\end{figure}
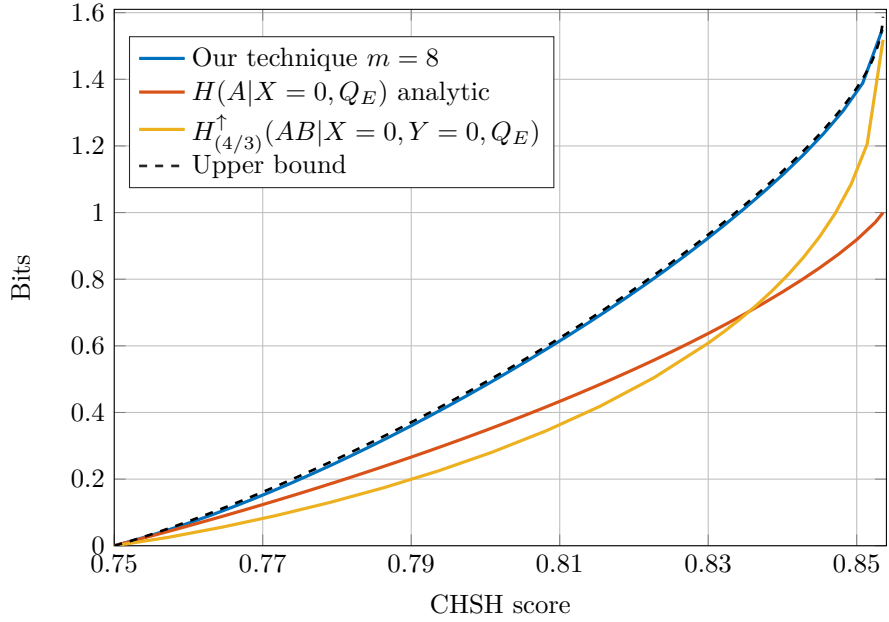

\subsubsection{Randomness from the full distribution}

By increasing our knowledge about the conditional distribution with which Alice and Bob's devices operate we further constrain the possible devices that could produce the statistics we observe in the protocol. In turn we can hope that this leads to larger bounds on the randomness produced by the devices. At the extreme end of this, we would have knowledge of the entire distribution characterizing the devices. In the following we assume that we have access to the complete distribution and we impose these constraints in the SDP. 

To make the plots more experimentally relevant we also assume the devices are affected by inefficient detection events. That is, with some probability $\eta \in [0,1]$ the devices will operate normally and with probability $1-\eta$ the devices will fail and deterministically output $0$. The conditional distribution for such devices takes the form 
\begin{equation}
	p(a,b|x,y) = \begin{cases}
		\eta^2 q(a,b|x,y) + \eta(1-\eta)(q(a|x)+q(b|y)) + (1-\eta)^2 \qquad&\text{if }a=0=b \\
		\eta^2 q(a,b|x,y) + \eta(1-\eta)q(a|x) &\text{if }a\neq0=b \\
		\eta^2 q(a,b|x,y) + \eta(1-\eta)q(b|y) &\text{if }a=0\neq b \\
		\eta^2 q(a,b|x,y) &\text{otherwise}
	\end{cases}
\end{equation}
where $q(a,b|x,y)$ is the conditional distribution of the devices when $\eta = 1$.

\begin{figure}
	\centering
	\definecolor{mycolor2}{rgb}{0.00000,0.44700,0.74100}%
	\definecolor{mycolor4}{rgb}{0.85000,0.32500,0.09800}%
	\definecolor{mycolor3}{rgb}{0.92900,0.69400,0.12500}%
	\definecolor{mycolor1}{rgb}{0.49400,0.18400,0.55600}%
	\begin{tikzpicture}
		
		\begin{axis}[%
			width=4in,
			height=2.8in,
			scale only axis,
			xmin=0.7,
			xmax=1.0,
			ymin=0,
			ymax=2.0,
			grid=major,
			xlabel={Detection efficiency ($\eta$)},
			ylabel={Bits},
			xtick={0.7, 0.75, 0.8, 0.85, 0.9, 0.95,1.0},
			ytick={0.0,0.2,0.4,0.6,0.8,1,1.2,1.4,1.6,1.8,2.0},
			axis background/.style={fill=white},
			legend style={at={(0.55,0.95)},legend cell align=left, align=left, draw=white!15!black}
			]
			\addplot[color=mycolor2, line width=1.2pt] table[col sep=comma] {kosaki_de_global_8M_tikz.dat};
			\addlegendentry{Our technique $m = 8$}
			\addplot[color=mycolor1, line width=1.2pt] table[col sep=comma] {ernest_global_de_tikz.dat};
			\addlegendentry{TSGPL bound}
			\addplot[color=mycolor3, line width=1.2pt] table[col sep=comma] {h2_global_DE_l2_tikz.dat};
			\addlegendentry{$H^{\uparrow}_{(2)}(AB|X=0,Y=0,Q_E)$}	
		\end{axis}
	\end{tikzpicture}%
	\caption{\textbf{Global randomness in the 2222-scenario.} Comparison of lower bounds on $H(AB|X=0,Y=0,Q_E)$ for quantum devices that are constrained by a full distribution. The numerical bounds for our technique were computed using speedups (1) and (3) from Remark~\ref{rem:faster_computations} at relaxation level $2 + ABZ$ including all monomials present in the objective function. A single SDP takes less than a minute to run at this level. We also compare with the iterated mean entropy $\imHup{2}(AB|X=0,Y=0,Q_E)$ from~\cite{BFF21} and the numerical technique from~\cite{TSGPL19} which we refer to as the TSGPL bound.}
	
	\label{fig:de_global_2222}
\end{figure}
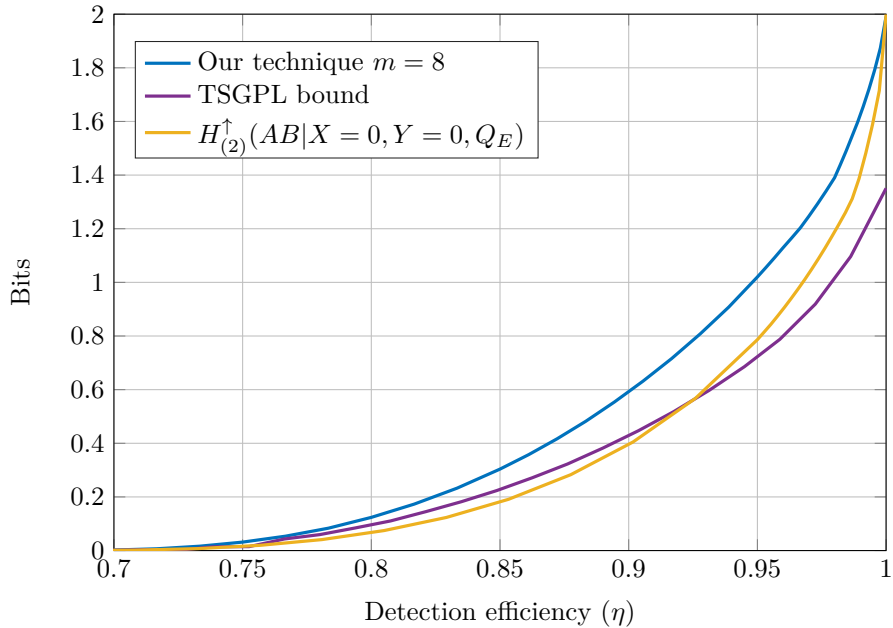

In Figure~\ref{fig:de_global_2222} we compare lower bounds on the entropy $H(AB|X=0,Y=0,Q_E)$  when Alice and Bob's devices have inefficient detectors. For the curve representing our technique we used a Gauss-Radau decomposition with $m=8$ nodes and at each data point we selected a two-qubit distribution in order to maximize the entropy produced by the devices. We compare our technique with the iterated mean entropy $\imHup{2}(AB|X=0,Y=0,Q_E)$ from~\cite{BFF21} and the TSGPL method from~\cite{TSGPL19} both of which also constrained the devices by their full distribution. Compared with the other methods, our technique is everywhere larger and on the whole the difference is substantial. We also note that our technique is again significantly faster than the other numerical techniques presented here.

\subsubsection{Better bounds on DI-QKD key rates}

Thus far we have only concerned ourselves with bounds on local and global entropies. However, we can use these bounds to compute the asymptotic rates of some DI-QKD protocols. The asymptotic rate of a DI-QKD protocol with one-way error correction is given by the Devetak-Winter bound~\cite{DW05}
\begin{equation}
H(A|X=x^*,Q_E) - H(A|B,X=x^*,Y=y^*),
\end{equation}
where again we are assuming a spot checking protocol and we consider the rate as the spot checking probability tends to 0 and the number of rounds tends to infinity. The second term in the rate can be directly estimated from the statistics in the protocol and the first term can be lower bounded using our technique.

We consider DI-QKD protocols in the 2322-scenario and we take $(x^*,y^*) = (0,2)$, i.e., Bob's third input acts as his key generating input. In the same setup as the previous figure we consider devices with inefficient detectors and we constrain them by their full distribution.\footnote{Actually, we only constrain the devices by their distribution for inputs $(X,Y) \in \{0,1\}^2$.} However, for QKD we allow Bob to record a device failure with an additional symbol $\perp$ when he receives his key generating input $y^*=2$. By doing this, he collects more detailed information about his device's behaviour and in turn this allows him to reduce the size of $H(A|B,X=x^*,Y=y^*)$ slightly. We refer the reader to~\cite{ML12} for a more detailed discussion of this post-processing of no-click events. 

To further boost the rates we also include preprocessing of the raw key~\cite{HSTRBS20}. Loosely, we allow for Alice to add additional noise to the outputs of her devices. In certain circumstances, this can increase the value of $H(A|X=x^*, Q_E)$ more than it increases the value of $H(A|B,X=x^*,Y=y^*)$ and therefore increasing the rate overall. More specifically, after Alice and Bob have collected their raw key (the outputs of their devices when $(X,Y) = (0,2)$) Alice will independently flip each of her key-bits with some fixed probability $q \in [0,1/2]$. For example, the post-measurement state for a single key-generating round is transformed after this preprocessing to 
\begin{equation}
\begin{aligned}
	\rho_{ABQ_E} &= \sum_{ab} ((1-q)\outer{a} + q \outer{a \oplus 1}) \otimes \outer{b} \otimes \rho_{Q_E}(a,b,x^*,y^*) \\
	&= \sum_{ab} \outer{a} \otimes \outer{b} \otimes ((1-q)\rho_{Q_E}(a,b,x^*,y^*) + q\,\rho_{Q_E}(a\oplus 1,b,x^*,y^*)) \\
	&= \sum_{ab} \outer{a} \otimes \outer{b} \otimes \ptr{Q_AQ_B}{\rho_{Q_AQ_BQ_E} (((1-q) M_{a|x^*} + q M_{a\oplus 1 |x^*}) \otimes N_{b|y^*} \otimes \id)}.
\end{aligned}
\end{equation}
Thus this preprocessing can be seen as equivalent to Alice transforming her measurement from $\{M_0, M_1\}$ to $\{(1-q) M_0 + q M_1, (1-q) M_1 + q M_0\}$ on key generating rounds. It follows that we can then model this preprocessing in our numerical computations by modifying Alice's measurement operators in the objective function of~\eqref{eq:full_di_optimization_tensor} appropriately. 
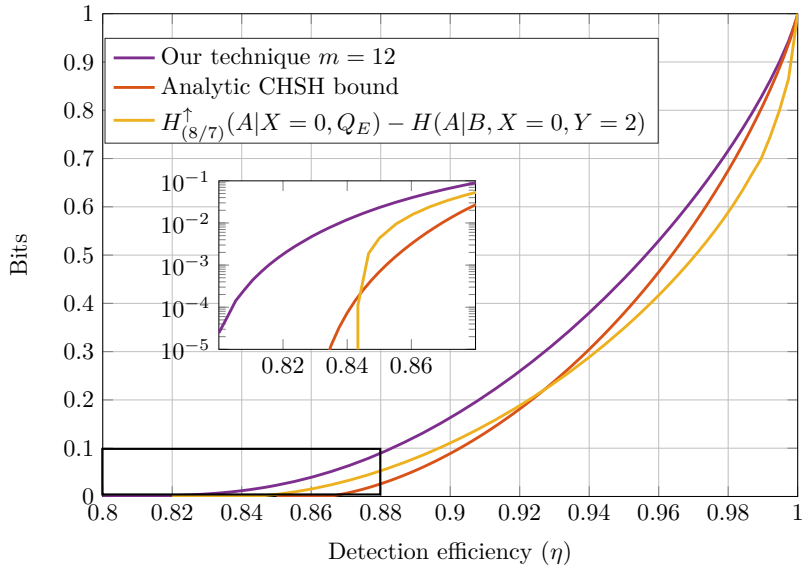
\begin{figure}
	\centering
	\definecolor{mycolor2}{rgb}{0.00000,0.44700,0.74100}%
	\definecolor{mycolor4}{rgb}{0.85000,0.32500,0.09800}%
	\definecolor{mycolor3}{rgb}{0.92900,0.69400,0.12500}%
	\definecolor{mycolor1}{rgb}{0.49400,0.18400,0.55600}%
	\begin{tikzpicture}[scale=0.9]
		
		\begin{axis}[%
			width=4in,
			height=2.8in,
			scale only axis,
			xmin=0.8,
			xmax=1.0,
			ymin=0,
			ymax=1.0,
			grid=major,
			xlabel={Detection efficiency ($\eta$)},
			ylabel={Bits},
			xtick={0.8, 0.82, 0.84, 0.86, 0.88, 0.9, 0.92, 0.94, 0.96, 0.98,1.0},
			ytick={0.0,0.1,0.2,0.3,0.4,0.5,0.6,0.7,0.8,0.9,1.0},
			axis background/.style={fill=white},
			legend style={at={(0.8,0.95)},legend cell align=left, align=left, draw=white!15!black}
			]
			\addplot[color=mycolor1, line width=1.2pt] table[col sep=comma] {qkd_2322_12M_noisy_tikz.dat};
			\addlegendentry{Our technique $m = 12$}
			\addplot[color=mycolor4, line width=1.2pt] table[col sep=comma] {qkd_chsh_preprocessing_analytic_tikz.dat};
			\addlegendentry{Analytic CHSH bound}	
							\addplot[color=mycolor3, line width=1.2pt] table[col sep=comma] {h87_de_optbin_tikz.dat};
							\addlegendentry{$H^{\uparrow}_{(8/7)}(A|X=0,Q_E) - H(A|B,X=0,Y=2)$}
			\coordinate (insetPosition) at (axis cs:0.81,0.2);
			\coordinate (pos1big) at (axis cs:0.797,-0.012);
			\coordinate (pos2big) at (axis cs:0.869,0.073);
			\coordinate (topleftbig) at (axis cs:0.8,0.1);
			\coordinate (bottomrightbig) at(axis cs:0.88,0.0);	
			,	\end{axis}
		\begin{semilogyaxis}[at={(insetPosition)},anchor={outer south west},
			width=2.1in,
			height=1.6in,
			xmin=0.8,
			xmax=0.88,
			ymin=1e-5,
			ymax=1e-1,
			ytick = {1e-5, 1e-4, 1e-3, 1e-2, 1e-1},
			xtick = {0.82, 0.84, 0.86},
			xticklabels={0.82,0.84,0.86},
			yticklabels={$10^{-5}$,$10^{-4}$,$10^{-3}$,$10^{-2}$,$10^{-1}$},
			axis background/.style={fill=white}
			]
			\addplot[color=mycolor1, line width=1.2pt] table[col sep=comma] {qkd_2322_12M_noisy_tikz.dat};
			\addplot[color=mycolor4, line width=1.2pt] table[col sep=comma] {qkd_chsh_preprocessing_analytic_tikz.dat};
			\addplot[color=mycolor3, line width=1.2pt] table[col sep=comma] {h87_de_optbin_tikz.dat};
			\coordinate (topleftsmall) at (axis cs:0.79,0.04);
			\coordinate (bottomrightsmall) at (axis cs:0.87,-0.01);
		\end{semilogyaxis}
		\draw[thick] (pos1big) rectangle (pos2big); 
	\end{tikzpicture}%
	\caption{\textbf{QKD 2322 rates.} Comparison of lower bounds on the key rate $H(A|X=0,Q_E) - H(A|B,X=0,Y=2)$. The curves related to our technique were computed using speedups (1) and (3) from Remark~\ref{rem:faster_computations} at a relaxation level $2 + ABZ$ and a single SDP at this relaxation level takes a few seconds to run. We also compare with the iterated mean entropy $\imHup{8/7}$ from~\cite{BFF21} and with the analytic bound for a CHSH based DI-QKD protocol with a preprocessing step from~\cite{HSTRBS20,WAP20}.}
	
	\label{fig:qkd_2322}
\end{figure}

In Figure~\ref{fig:qkd_2322} we plot the key rates achievable for devices with inefficient detectors. We compute our technique using a Gauss-Radau quadrature of $m=12$ nodes. We compare our technique to bounds on the rate given by the iterated mean entropy $\imHup{8/7}$ from~\cite{BFF21} (we do not include preprocessing for this curve as it did not give improvements) and an analytical bound from~\cite{HSTRBS20,WAP20} which is for a protocol based on the CHSH game and includes the preprocessing step. For both of the curves that incorporate the preprocessing, at each data point we optimized the probability $q$ with which Alice performs the her bitflip in order to maximize the obtained rates. We see that for the entirety of the plot the curve computed using our technique outperforms the other rate curves. In particular this shows the advantage that one can gain by changing the knowledge collected about the devices from a Bell-inequality violation to the full distribution. In the inset plot we zoom in on the region $[0.8, 0.88] \times [10^{-5},0.1]$. Looking at the inset plot we can inspect where the various rate curves vanish, in other words the minimal detection efficiency required to perform DI-QKD according to that curve. For the curve computed using the iterated mean entropy $\imHup{8/7}$ the minimal detection efficiency is around $0.845$. The red curve based on the analytic bound actually vanishes around $0.826$~\cite{WAP20}. However, we see that by using our method (purple curve) we are able to substantially reduce the threshold detection efficiency down to under $0.8$. This threshold is now within a regime that is experimentally achievable. This warrants a more thorough analysis, using a more realistic model and computing finite round rates, in order to ascertain the performance capabilities of a current DIQKD experiment. We leave such an analysis to future work.

\section{Methods}\label{sec:methods}

The objective of this section is to derive the variational upper bounds on the relatively entropy that were used in the previous section (see Theorem~\ref{thm:D_upper_bounds} below). However, first we need to define the quantum relative entropy and more generally quasi-relative entropies in the framework of von Neumann algebras and establish some of their properties (Section~\ref{sec:quasi-rel}). We will then also describe rational approximations of the logarithm function (Section~\ref{sec:approx_log}), which are a crucial ingredient in the derivation.

\subsection{Quasi-relative entropies}
\label{sec:quasi-rel}

Quasi-relative entropies can be defined in general for two positive semidefinite linear functionals on a von Neumann algebra $\cA$. A linear functional $\rho : \cA \to \CC$ is said to be \emph{positive semidefinite} if $\rho(a^* a) \geq 0$ for all $a \in \cA$. We refer to a positive semidefinite linear functional satisfying $\rho(\id) = 1$ as a \emph{state}. We will mostly focus on the setting where $\cA = B(H)$ for some separable Hilbert space $H$ and positive semidefinite linear functionals $\rho$ defined by some trace-class operator $\widetilde{\rho}$ by $\rho(a) = \tr{\widetilde{\rho} a}$ for all $a \in B(H)$. To simplify notation, we will often use the same symbol $\rho$ for both the positive semidefinite linear functional $\rho$ and the trace-class positive semidefinite operator $\widetilde{\rho}$.

\paragraph{Functional calculus of quadratic forms} We review briefly the functional calculus of sesquilinear (or quadratic) forms introduced by Pusz and Woronowicz in \cite{PW75} and further applied in \cite{PW78}. Let $\alpha$ and $\beta$ be two positive semidefinite \emph{quadratic forms} on a complex vector space $U$. A function $\alpha : U \to \RR$ is called a quadratic form if it satisfies $\alpha(\lambda u) = |\lambda|^2 \alpha(u)$ for any $\lambda \in \CC$ and $u \in U$ as well as the parallelogram identity $\alpha(u+v) + \alpha(u-v) = 2\alpha(u) + 2\alpha(v)$.  We say that it is positive semidefinite if $\alpha(u) \geq 0$ for all $u \in U$. The theory of Pusz and Woronowicz allows one to define a new quadratic form $F(\alpha,\beta)$ for any function $F:\RR_+^2 \to \RR$ that is measurable (with respect to the $\sigma$-algebra of Borel subsets of $\RR_+^2$), \emph{positive homogeneous} (i.e., $F(\lambda x , \lambda y) = \lambda F(x,y)$ for all $\lambda \geq 0$) and \emph{locally bounded from below} (i.e., bounded from below on all compact sets). For example if $F(x,y) = \sqrt{xy}$, this theory allows us to define the geometric mean of two positive semidefinite quadratic forms.

The main idea of the theory is that one can always represent a pair of positive semidefinite quadratic forms by two positive semidefinite, \emph{commuting} operators. We call a \emph{representation} of $(\alpha,\beta)$ a tuple $(H,A,B,h)$ where $H$ is a Hilbert space, $A,B$ are positive semidefinite and commuting bounded operators on $H$, and $h:U\to H$ is a linear map onto a dense subset of $H$ such that $\alpha(u) = \<h(u), A h(u) \>_H$, $\beta(u) = \<h(u), B h(u) \>_H$ for all $u \in U$. As shown in \cite[Theorem 1.1]{PW75}, such a representation always exists. Then one defines the quadratic form $F(\alpha,\beta)$ on $U$ by
\begin{equation}
\label{eq:def_joint_spectral_measure}
F(\alpha,\beta)(u) = \< h(u), F(A,B) h(u) \>_H = \int_{\RR_+^2} F(x,y) \< h(u), \mathrm{d}E(x,y) h(u) \> \ ,
\end{equation}
where $E$ is the joint spectral measure of the commuting operators $A$ and $B$; see~\cite[Chapter 5]{schmudgen2012unbounded} for the definition of the joint spectral measure and \cite{PW78} for the justification of the existence of this integral. The main result of the theory \cite[Theorem 1.2]{PW75} is that the quadratic form $F(\alpha, \beta)$ defined above is independent of the choice of representation $(H,A,B,h)$ of the pair $(\alpha,\beta)$.

\paragraph{Defining quasi-relative entropies using the functional calculus of quadratic forms} Following the work of \cite{PW78}, we define the quasi-relative entropy of two positive semidefinite linear functionals $\rho,\sigma$ on a von Neumann algebra $\cA$.
Given $\rho$ and $\sigma$, we define two positive semidefinite quadratic forms on $\cA$ (note that $\cA$ is also a complex vector space) by 
\begin{equation}
	\label{eq:LrhoRrho}
	L_{\sigma}(a) = \sigma(aa^*) \quad \text{and} \quad R_{\rho}(a) = \rho(a^* a) \quad \text{for } a \in \cA \ .
\end{equation}
Then using the functional calculus of Pusz-Woronowicz, one can define $F(L_{\sigma},R_{\rho})$, as a quadratic form on $\cA$. This leads to the following definition for the $F$-quasi-relative entropy.
\begin{definition}[$F$-quasi-relative entropy]
	\label{def:quasi-rel-entr}
	Let $\rho$ and $\sigma$ be two positive semidefinite linear functionals on the von Neumann algebra $\cA$. Then for any measurable, positive homogeneous and locally bounded from below function $F: \RR^2_+ \to \RR \cup \{+\infty\}$ the $F$-quasi-relative entropy between $\rho$ and $\sigma$ is
	\begin{equation}
		D_F(\rho \| \sigma) := F(L_{\sigma},R_{\rho})(\id)
	\end{equation}
	where 
	\begin{equation}
		\label{eq:pwdiv}
		F(L_{\sigma}, R_{\rho})(I) := \< h(I) , F(A,B) h(I) \>_K \,
	\end{equation}
	with $(K,A,B,h)$ being a representation of the two positive semidefinite quadratic forms $(L_{\sigma},R_{\rho})$ and this expression is defined by the integral in~\eqref{eq:def_joint_spectral_measure}. We denote by $\nu_{\rho, \sigma}$ the positive measure $\nu_{\rho, \sigma} = \<h(I), E h(I) \>$ on $\RR_+^2$, where $E$ is the joint spectral measure of $A$ and $B$. Note that $\nu_{\rho, \sigma}$ only depends on $\rho$ and $\sigma$ and not on the function $F$. With this notation we have
	\begin{equation}
	\label{eq:def_div_int}
	 D_F(\rho \| \sigma) = \int_{\RR_+^2} F(x,y) \mathrm{d} \nu_{\rho, \sigma}(x,y) \ ,
	 \end{equation}
with $\int_{\RR_+^2} x \mathrm{d} \nu_{\rho, \sigma}(x,y) = \sigma(I)$ and $\int_{\RR_+^2} y \mathrm{d} \nu_{\rho, \sigma}(x,y) = \rho(I)$.
\end{definition}
We remark that for the applications we consider, it is important to allow the function $F$ to be infinite on some points. As described in~\cite{PW78}, the functional calculus can readily be extended to this setting. In this setup, the quadratic form $F(L_{\sigma},R_{\rho})$ can take the value $+\infty$ on some points and thus, as expected, the corresponding $F$-quasi-relative entropy can be infinite.

\begin{definition}[Quantum relative entropy, also called Umegaki divergence~\cite{umegaki}]
	Let $\rho$ and $\sigma$ be two positive semidefinite linear functionals on the von Neumann algebra $\cA$. Then the \emph{relative entropy} between $\rho$ and $\sigma$, written $D(\rho\|\sigma)$ is defined as the $F$-quasi-relative entropy with $F(x,y) = y \log_2(y/x)$:
	\begin{equation}
		D(\rho\| \sigma) :=	D_{F}(\rho\|\sigma) \,.
	\end{equation}
\end{definition}

Note that when $\cA = B(H)$ and $\sigma$ is the trace, we obtain the von Neumann entropy $H(\rho) = - D(\rho \| \sigma)$. The conditional von Neumann entropy can also be obtained in a similar way from the quantum relative entropy, this is described in Section~\ref{sec:DI}.

\begin{definition}[$\alpha$-quasi-relative entropy]
	\label{def:alpha-quasi}
	Let $\rho$ and $\sigma$ be two positive semidefinite linear functionals on $\cA$ and let $\alpha \in (0,1) \cup (1,\infty)$. Then the $\alpha$-quasi-relative entropy written $Q_{\alpha}(\rho \|\sigma)$ is defined as $F$-quasi-relative entropy with $F(x,y) = y^\alpha x^{1-\alpha}$.
\end{definition}

\begin{remark}[Finite-dimensional case and the relative modular operator]
	\label{rem:finitedimensions}
	It is instructive to consider the setting where $H$ is a $d$-dimension Hilbert space. In this case $B(H)$ is the algebra of $d\times d$ matrices with elements in $\CC$. Then the two positive semidefinite linear functionals $\rho$ and $\sigma$ can be represented by positive semidefinite operators $\widetilde{\rho}$ and $\widetilde{\sigma}$, i.e.,  $\rho(a) = \tr{\widetilde{\rho} a}$ and $\sigma(a) = \tr{\widetilde{\sigma} a}$ for all $a \in B(H)$.

	Let $\cL_{\sigma}$ and $\cR_{\rho}$ be left and right-multiplication operators by $\widetilde\sigma$ and $\widetilde{\rho}$, that is $\cL_{\sigma},\cR_{\rho}:B(H) \to B(H)$ where $\cL_{\sigma}(a) = \widetilde{\sigma} a$ and $\cR_{\rho}(a) = a \widetilde{\rho}$. Note that $\cL_{\sigma}$ and $\cR_{\rho}$ are two commuting operators on $B(H)$ that are self-adjoint with respect to the Hilbert-Schmidt inner product on $B(H)$, $\langle a , b \rangle_{HS} = \tr{a^* b}$. These operators realize the quadratic forms $L_{\sigma}$ and $R_{\rho}$ defined in \eqref{eq:LrhoRrho}, in the sense that $L_{\sigma}(a) = \<a , \cL_{\sigma} a\>_{HS}$ and $ R_{\rho}(a) = \<a, \cR_{\rho} a\>_{HS}$. Moreover, it is clear from their actions that $[\cL_{\sigma},\cR_{\rho}]=0$. In particular, our representation is $(B(H), \cL_\sigma, \cR_\rho, \mathrm{id})$ where $\mathrm{id}:B(H) \rightarrow B(H)$ is the identity map. Thus, for an $F$ satisfying the conditions of Definition~\ref{def:quasi-rel-entr}, the $F$-quasi-relative entropy between $\rho$ and $\sigma$ is given by
	\begin{equation}
	F(L_{\sigma},R_{\rho})(\id) = \<\id , F(\cL_{\sigma},\cR_{\rho}) \id\>_{HS}\, .
	\end{equation}
	This is precisely the definition used in \cite{effrospnas}. Indeed, if we introduce the relative modular operator $\Delta = \cL_{\sigma} \cR_{\rho}^{-1}$ and we define $f(x) := F(x,1)$ then for positive homogeneous $F$ we have $F(x,y) = y f(x/y)$ and the above is equal to
	\begin{equation}
	\<\id , \cR_{\rho} f(\cL_{\sigma} \cR_{\rho}^{-1})  \id\>_{HS} = \< \cR_{\rho}^{1/2} \id, f(\Delta) \cR_{\rho}^{1/2} \id \>_{HS} = \<\widetilde\rho^{1/2} , f(\Delta) \widetilde\rho^{1/2} \>_{HS}
	\end{equation}
	where we used the fact that $\cR_{\rho}^{1/2}$ is self-adjoint with respect to the Hilbert-Schmidt inner product, and that $\cR_{\rho}^{1/2} a = a \widetilde{\rho}^{1/2}$. Note that when $\rho$ is not invertible the operator $\Delta$ multiplies on the right by the generalized inverse of $\widetilde\rho$.
	It can be verified that in the finite-dimensional case, the quasi-relative entropy is given by (see e.g., \cite[Equation~15]{S14})
	\begin{equation}
	D_F(\rho \| \sigma) = \sum_{j,k} F(q_k,p_j) |\< \phi_k | \psi_j\>|^2
	\end{equation}
	where $\widetilde{\rho} = \sum_{j} p_j \proj{\psi_j}$ and $\widetilde{\sigma} = \sum_{k} q_k \proj{\phi_k}$ are spectral decompositions of the density matrices $\widetilde{\rho}$ and $\widetilde{\sigma}$ respectively. Finally, choosing $F(x,y) = y \log_2(y/x)$ we recover the usual expression for the quantum relative entropy
	\begin{equation}
	D(\rho\|\sigma) = \tr{\rho (\log_2 \rho - \log_2 \sigma)}
	\end{equation}
	and for the $\alpha$-quasi-relative entropies, $F(x,y) = x^{1-\alpha} y^\alpha$, we find $Q_\alpha(\rho\|\sigma) = \tr{\rho^{\alpha} \sigma^{1-\alpha}}$ which is the quantity within the logarithm of the Petz-R\'enyi divergences~\cite{Petz}.
\end{remark}

In order to obtain our variational expression for the quantum relative entropy, we will use a variational representation of $D_{F_t}$ when $F_t(x,y) = y\frac{x-y}{t(x-y)+y}$ for $t \in (0,1]$. The relevance of $F_t$ to the relative entropy is that if we let $F(x,y) = y\log_2(y/x)$, then $F(x,y) = -\frac{1}{\ln 2}\int_{0}^{1} F_t(x,y) dt$.

\begin{proposition}
	\label{prop:varparallelsum}
	Let $F_t(x,y) = y\frac{x-y}{t(x-y)+y}$ for $t \in (0,1]$, let $\rho$ and $\sigma$ be positive semidefinite linear functionals on a von Neumann algebra $\cA$. Then
	\begin{equation}
		\label{eq:var-DF-parallel}
		D_{F_t}(\rho\|\sigma) = \frac{1}{t} \inf_{a \in \cA} \left\{\rho(I) + \rho(a+a^*) + (1-t) \rho(a^* a) + t\sigma(aa^*)\right\} \ .
	\end{equation}

	Furthermore, if $t < 1$, and $\rho$ and $\sigma$ are two trace-class positive semidefinite operators on the separable Hilbert space $H$ satisfying $\rho \leq \lambda \sigma$ for some $\lambda \in \RR_+$, then the infimum in \eqref{eq:var-DF-parallel} is achieved at an element with norm bounded by $\alpha := \max( \frac{3}{2(1-t)}, \frac{3\lambda}{2t} )$, i.e., we can write
	\begin{align}
		\label{eq:var_expr_bound_norm}
		D_{F_t}(\rho\|\sigma) = \frac{1}{t} \inf_{\substack{Z \in B(H) \\ \| Z \| \leq \alpha}} \left\{\tr{\rho} + \tr{\rho (Z + Z^*)} + (1-t) \tr{\rho Z^* Z} + t \tr{\sigma ZZ^*} \right\} \ .
	\end{align}
\end{proposition}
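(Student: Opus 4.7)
The plan is to rewrite $g(a) := \rho(I) + \rho(a+a^*) + (1-t)\rho(a^*a) + t\sigma(aa^*)$ as a real quadratic functional on a Hilbert space via the Pusz-Woronowicz representation, minimize it by completing the square, and identify the minimum with $tD_{F_t}(\rho\|\sigma)$.

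Concretely, let $(K, A, B, h)$ be a representation of the pair of quadratic forms $(L_\sigma, R_\rho)$ in the sense of Definition~\ref{def:quasi-rel-entr}, and set $\xi := h(I)$, $\eta := h(a)$. Using the polarized sesquilinear forms $L_\sigma(a,b) = \sigma(ba^*)$ and $R_\rho(a,b) = \rho(a^*b)$ together with linearity of $h$, one computes
\[
g(a) = \langle \xi, B\xi\rangle + 2\Real\langle\xi, B\eta\rangle + \langle\eta, C\eta\rangle,\qquad C := tA + (1-t)B \succeq 0.
\]
This expression is continuous in $\eta$ since $A, B$ are bounded, and $h(\cA)$ is dense in $K$, so $\inf_{a\in\cA} g(a)$ equals the infimum over $\eta \in K$ of the continuous extension.

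Completing the square on $\overline{\mathrm{ran}\,C}$ (one checks $\ker C = \ker A \cap \ker B \subseteq \ker B$, hence $B\xi \in \overline{\mathrm{ran}\,C}$ so that the generalized inverse applies without ambiguity) yields
\[
\inf_{a\in\cA}g(a) = \langle\xi, (B - BC^{-1}B)\xi\rangle.
\]
Since $A$ and $B$ commute, the joint spectral calculus simplifies $B - BC^{-1}B = \frac{tB(A-B)}{tA+(1-t)B} = tF_t(A,B)$, so $\inf_a g(a) = t\langle\xi, F_t(A,B)\xi\rangle = tD_{F_t}(\rho\|\sigma)$, which rearranges to~\eqref{eq:var-DF-parallel}.

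For the refinement~\eqref{eq:var_expr_bound_norm} in the case $t<1$, I specialize to $\cA = B(H)$ and note that the first-order optimality condition reads as the Sylvester-type equation $t\sigma Z + (1-t)Z\rho = -\rho$ for a minimizer $Z \in B(H)$. The hypothesis $\rho \leq \lambda\sigma$ gives, via the Douglas factorization lemma, a contraction $X$ with $\rho^{1/2} = \sqrt\lambda X \sigma^{1/2}$. Combining this factorization with spectral arguments applied separately to the two summands of the Sylvester equation should produce two complementary operator-norm bounds $\|Z\|_{op} \leq \tfrac{1}{1-t}$ and $\|Z\|_{op} \leq \tfrac{\lambda}{t}$, both of which lie inside $\alpha = \max\bigl(\tfrac{3}{2(1-t)},\tfrac{3\lambda}{2t}\bigr)$; the factor $3/2$ then provides the slack to approximate the abstract minimizer by an element of $B(H)$ via a regularization $\sigma \mapsto \sigma + \epsilon I$ and a limit $\epsilon \to 0$.

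The main obstacle is precisely this norm bound. A key subtlety is that the hypothesis $\rho \leq \lambda\sigma$ does \emph{not} pass to a clean spectral inequality $B \leq \lambda A$ in the Pusz-Woronowicz representation of $(L_\sigma, R_\rho)$, because $L_\sigma$ encodes left multiplication by $\sigma$ whereas the hypothesis naturally controls right multiplication. The argument must therefore be carried out directly on the Sylvester equation in $B(H)$, with careful handling of the degenerate case where $\sigma$ fails to be bounded below on the support of $\rho$.
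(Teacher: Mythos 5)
Your proof of the identity \eqref{eq:var-DF-parallel} is correct and is essentially the paper's argument: the paper routes through the Pusz--Woronowicz parallel-sum formula $\langle x,(A{:}B)x\rangle=\min_{y+z=x}\langle y,Ay\rangle+\langle z,Bz\rangle$ after writing $F_t(x,y)=\tfrac1t\,(t(x-y):y)$, but that lemma is itself proved by exactly the completion of the square you perform, and the density/continuity step you use to pass from $h(\cA)$ to $K$ is the same. The algebra $B-BC^{-1}B=tF_t(A,B)$ with $C=tA+(1-t)B$ checks out, so the first half is fine.

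The second half has a genuine gap. You correctly reduce optimality to the Sylvester-type equation $(1-t)Z\rho+t\sigma Z=\rho$ (convexity makes the first-order condition sufficient, which you should also note), but the entire content of the claim is the \emph{existence} of a bounded solution with $\|Z\|\le\alpha$, and here your argument is only a declaration of intent: ``spectral arguments applied separately to the two summands \emph{should} produce'' the bounds $\tfrac{1}{1-t}$ and $\tfrac{\lambda}{t}$. This cannot work as stated. Standard Sylvester theory requires the spectra of the two coefficient operators to be separated, which fails here because $\rho$ and $\sigma$ are trace-class (hence compact, with $0$ in both spectra and possibly nontrivial kernels); and if you really obtained \emph{both} bounds you would conclude $\|Z\|\le\min(\tfrac{1}{1-t},\tfrac{\lambda}{t})$, whereas the true statement is a \emph{max} with a $3/2$ slack --- a sign that neither bound is available on its own and that the two conditions must be satisfied simultaneously by a single $\alpha$. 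The paper's Lemma~\ref{lem:lyapeq} handles this by a specific device: it forms the block operators $X=\mathrm{diag}((1-t)\rho,\;t\sigma)$ and $K=\tfrac12\bigl(\begin{smallmatrix}\rho&\rho\\ \rho&\lambda\sigma\end{smallmatrix}\bigr)$ on $H\oplus H$, verifies $\alpha X-K\ge 0$ (this is exactly where $\alpha\ge\tfrac{3}{2(1-t)}$ and $\alpha\ge\tfrac{3\lambda}{2t}$ are both needed) together with the growth condition $(X^2+2sK)^{1/2}-X\le\alpha sI$, and invokes Pedersen's theorem on $XT+TX=2K$ to get a positive solution $T$ with $\|T\|\le\alpha$ whose off-diagonal block is the desired $Z$. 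Your proposed regularization $\sigma\mapsto\sigma+\epsilon I$ does not resolve the difficulty either, because the whole point is to obtain a norm bound on $Z_\epsilon$ that is uniform in $\epsilon$; without an argument of the Pedersen type (or a substitute), that uniformity is exactly what is missing.
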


\begin{remark}
Note that a constant bound on the norm of $Z$ in \eqref{eq:var_expr_bound_norm} can only be guaranteed when $t < 1$. Indeed, observe that for that $t=1$ we have $F_1(x,y) = y - y^2 / x$ so that $D_{F_1}(\rho \| \sigma) = \rho(I) - Q_2(\rho \| \sigma)$, and the variational expression becomes:
\begin{equation}
D_{F_1}(\rho \| \sigma) = \inf_{Z \in B(H)} \left\{ \tr{\rho} + \tr{\rho(Z+Z^*)} + \tr{\sigma ZZ^*} \right\}.
\end{equation}
Assuming $\rho$ and $\sigma$ are finite-dimensional operators the infimum can be shown to be attained at $Z = -\sigma^{-1} \rho$ with an optimal value equal to $\tr{\rho} - \tr{\rho^2 \sigma^{-1}}$. One can see that the operator norm of $\sigma^{-1} \rho$ can be made arbitrarily large even if we assume that $\rho \leq \lambda \sigma$ (e.g., it suffices to take $\rho$ to be a pure state).
\end{remark}

	\begin{proof}
		Proceeding in a similar way as~\cite[Lemma]{PW75}, if $A,B$ are two commuting Hermitian operators on a Hilbert space $H$, for any $y, z$ with $y + z = x$, a simple calculation gives
		\begin{equation}
		\< x , F_t(A,B) x \> = \frac{1}{t} \< x, \frac{t(A-B)B}{t(A-B) + B} x \> = \frac{1}{t} \left(\< y , t(A-B) y \> + \< z , B z \> - \< u, (t(A-B) + B) u \> \right) \ ,
		\end{equation}
		where $u = (t(A-B)+B)^{-1} B x - y$. As $t(A-B) + B$ is a positive semidefinite operator and we may choose $y$ so that $u = 0$, it follows that
		\begin{equation}
		\< x , F_t(A,B) x \> = \frac{1}{t} \min_{y+z = x} \< y , t(A-B) y \> + \< z , B z\> \ .
		\end{equation}
		Now we consider the positive semidefinite quadratic forms $L_\sigma$, $R_\rho$ for the positive semidefinite functionals $\sigma$ and $\rho$ as defined in~\eqref{eq:LrhoRrho} and we take a corresponding representation $(H,A,B,h)$. We get
		\begin{equation}
		F_t(L_{\sigma}, R_{\rho})(I) = \frac{1}{t} \min_{y+z = h(I)} \< y, t(A - B) y \> + \<z , B z \> \ .
		\end{equation}
		
		Since $h: \cA \to H$ is onto a dense subset of $H$, we can replace the minimum with an infimum to $y,z \in H$ such that $y = h(a)$ and $z=h(b)$. Then using the fact that $\< h(a), A h(a) \> = \sigma(aa^*)$ and $\< h(a), B h(a) \> = \rho(a^*a)$ we get
		\begin{equation}
		F_t(L_{\sigma}, R_{\rho})(I) = \frac{1}{t} \inf_{h(a)+h(b)=h(I)} \left\{ t\sigma(aa^*) - t\rho(a^* a) + \rho(b^* b)\right\}.
		\end{equation}
		Note that if we have $a_1$ and $a_2$ satisfying $h(a_1) = h(a_2)$, then $\sigma(a_1a_1^*) = \< h(a_1), A h(a_1)\> = \< h(a_2), A h(a_2)\> = \sigma(a_2 a_2^*)$ and similarly for $\rho$. As such we can replace the condition in the infimum by simply $a+b = I$, which gives
		\begin{equation}
		\begin{aligned}
			D_{F_t}(\rho\|\sigma) &= \frac{1}{t} \inf_{a+b=I} \left\{t\sigma(aa^*) - t\rho(a^* a) + \rho(b^* b)\right\}\\
			&= \frac{1}{t} \inf_{a \in \cA} \left\{\rho(I) - \rho(a+a^*) + (1-t) \rho(a^* a) + t\sigma(aa^*)\right\} \\
			&= \frac{1}{t} \inf_{a \in \cA} \left\{\rho(I) +  \rho(a+a^*) + (1-t) \rho(a^* a) + t\sigma(aa^*)\right\}.
		\end{aligned}
		\end{equation}
		where on the final line we made the substitution $a \mapsto -a$. This proves \eqref{eq:var-DF-parallel}.
		
		To prove the second part we first derive sufficient conditions for $a_0$ to achieve the infimum. Call $\phi(a)$ the expression to be infimized in~\eqref{eq:var-DF-parallel}. Note that $\phi$ is convex since $\rho$ and $\sigma$ are positive semidefinite (note for example that the restriction of $\phi$ to any line $a + s b$ is a convex quadratic in $s \in \RR$). If $a_0$ achieves the infimum of $\phi$ it must be that $\frac{d}{ds} \phi(a_0 + s b)|_{s=0} = 0$ for any $b \in \cA$. This equation gives
		\begin{equation}
		\label{eq:optim1}
		\rho(b+b^*) + (1-t) (\rho(a_0^* b + b^* a_0)) + t \sigma(a_0 b^* + b a_0^*) = 0 \;\; \forall b \in \cA.
		\end{equation}
		Since $\phi$ is convex this condition is also sufficient for optimality.
		
		Now assume that $\rho$ and $\sigma$ are trace-class positive semidefinite operators on the separable Hilbert space $H$ so that $\rho(a)=\tr{\rho a}$ and $\sigma(a) = \tr{\sigma a}$. Equation \eqref{eq:optim1} says that a necessary and sufficient condition for $A_0 \in B(H)$ to achieve the infimum is that
		\begin{equation}
		\tr{\rho(B+B^*)} + (1-t) \tr{\rho A_0^* B + \rho B^* A_0} + t \tr{\sigma A_0 B^* + \sigma B A_0^*} = 0 \;\; \forall B \in B(H).
		\end{equation}
Letting $M = \rho + (1-t) A_0 \rho + t \sigma A_0$, this is the same as $\tr{MB^*} + \tr{M^* B} = 0$ for all $B \in B(H)$ which is equivalent to $M = 0$. For convenience, we let $Z = -A_0$ and arrive at the following operator Sylvester equation
		\begin{equation}
		(1-t) Z \rho + t\sigma Z = \rho.
		\end{equation}
The existence of a bounded solution to this equation is guaranteed by the following lemma. 
\begin{lemma}
\label{lem:lyapeq}
Let $t \in (0,1)$ and let $\rho$ and $\sigma$ be two trace-class positive semidefinite operators on the separable Hilbert space $H$ satisfying $\rho \leq \lambda \sigma$ for some $\lambda \in \RR_+$. Then the operator equation
	\begin{align}
		\label{eq:equation_Z}
		(1-t) Z \rho + t \sigma Z = \rho
	\end{align}
	has a solution $Z \in B(H)$ which satisfies $\| Z \| \leq \alpha := \max\{ \frac{3}{2(1-t)}, \frac{3\lambda}{2t} \}$.
\end{lemma}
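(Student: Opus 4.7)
The plan is to construct $Z$ via a weak integral representation, bound the resulting sesquilinear form on $H$ by $\alpha$, and verify the Sylvester equation directly. Specifically, for $u, v \in H$ I would define
\[
\langle u, Z v\rangle := \int_0^\infty \langle u, e^{-st\sigma} \rho \, e^{-s(1-t)\rho} v\rangle \, ds,
\]
motivated by the identity $(a+b)^{-1} = \int_0^\infty e^{-s(a+b)}\,ds = \int_0^\infty e^{-sa}e^{-sb}\,ds$ for commuting positive reals, applied formally to the super-operators of left multiplication by $t\sigma$ and right multiplication by $(1-t)\rho$, evaluated at $\rho$. Once this is shown to make sense, the operator equation is derived by differentiating the integrand in $s$ and applying the fundamental theorem of calculus.

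For the norm bound, I would rewrite the integrand as $\langle \rho^{1/2}e^{-st\sigma}u, \rho^{1/2}e^{-s(1-t)\rho}v\rangle$ (using that $\rho^{1/2}$ commutes with $e^{-s(1-t)\rho}$) and apply Cauchy--Schwarz in $L^2((0,\infty), ds)$, estimating each factor via Tonelli and the spectral calculus. The $v$-side gives
\[
\int_0^\infty \|\rho^{1/2}e^{-s(1-t)\rho}v\|^2 \, ds = \int x \left(\int_0^\infty e^{-2s(1-t)x}\,ds\right) d\mu_v^\rho(x) \leq \frac{\|v\|^2}{2(1-t)},
\]
where $\mu_v^\rho$ is the spectral measure of $\rho$ at $v$. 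For the $u$-side, the hypothesis $\rho \leq \lambda\sigma$ gives $\|\rho^{1/2}e^{-st\sigma}u\|^2 = \langle e^{-st\sigma}u, \rho\, e^{-st\sigma}u\rangle \leq \lambda \|\sigma^{1/2}e^{-st\sigma}u\|^2$, and the same Fubini/spectral computation for $\sigma$ yields $\lambda\|u\|^2/(2t)$. Combining, $|\langle u,Zv\rangle| \leq \frac{1}{2} \sqrt{\lambda/(t(1-t))}\,\|u\|\|v\|$, and the AM--GM bound $\sqrt{ab} \leq \max(a,b)$ applied to $a = 1/(1-t)$ and $b = \lambda/t$ gives that this is at most $\frac{1}{2}\max(1/(1-t), \lambda/t)\,\|u\|\|v\| \leq \alpha\|u\|\|v\|$. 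Riesz representation then produces $Z \in B(H)$ with $\|Z\| \leq \alpha$.

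To verify $(1-t)Z\rho + t\sigma Z = \rho$, I would move $\rho$ and $\sigma$ inside the integral using the commutations $\rho\, e^{-s(1-t)\rho} = e^{-s(1-t)\rho}\rho$ and $\sigma\, e^{-st\sigma} = e^{-st\sigma}\sigma$. The combined integrand then equals $\langle u, e^{-st\sigma}[(1-t)\rho + t\sigma]\rho\, e^{-s(1-t)\rho}v\rangle = -\frac{d}{ds}\langle u, e^{-st\sigma}\rho\, e^{-s(1-t)\rho}v\rangle$, so by the fundamental theorem of calculus the integral collapses to $\langle u, \rho v\rangle - \lim_{s\to\infty}\langle u, e^{-st\sigma}\rho\, e^{-s(1-t)\rho}v\rangle$, and the boundary term at infinity vanishes by dominated convergence applied to the spectral representation $\rho\, e^{-s(1-t)\rho}v = \int x e^{-s(1-t)x}\,dE_v^\rho(x)$ (the integrand tends pointwise to $0$ and is dominated by $x \in L^1(\mu_v^\rho)$). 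The main technical obstacle is making these operator-valued integrals rigorous in the infinite-dimensional setting where $\rho$ and $\sigma$ may have nontrivial kernels: the integrand $e^{-st\sigma}\rho\, e^{-s(1-t)\rho}$ is not in $L^1((0,\infty); B(H))$ in general, so one must work throughout at the level of sesquilinear forms and invoke the spectral theorem for trace-class positive operators to justify both Tonelli and the vanishing of the boundary term.
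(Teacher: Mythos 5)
Your proposal is correct, but it takes a genuinely different route from the paper. The paper does not construct $Z$ explicitly: it forms the block operators $X = \begin{pmatrix} (1-t)\rho & 0 \\ 0 & t\sigma\end{pmatrix}$ and $K = \tfrac12\begin{pmatrix}\rho & \rho \\ \rho & \lambda\sigma\end{pmatrix}$ on $H\oplus H$, checks that $K \leq \alpha X$ and a square-root growth condition, and then invokes an existence theorem of Pedersen for the operator equation $XT+TX=2K$ to get a positive $T$ with $\|T\|\leq\alpha$ whose off-diagonal block is the desired $Z$. You instead build the solution directly from the classical integral formula for Sylvester equations, $Z=\int_0^\infty e^{-st\sigma}\rho\,e^{-s(1-t)\rho}\,ds$, interpreted weakly as a sesquilinear form; your Cauchy--Schwarz estimate correctly splits $\rho$ as $\rho^{1/2}\rho^{1/2}$, uses $\rho\leq\lambda\sigma$ on the $u$-side, and the Tonelli/spectral computations are sound even when $\rho$ and $\sigma$ have nontrivial kernels (the spectral integrands vanish at $0$, so non-invertibility causes no harm). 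The verification of the equation via $-\frac{d}{ds}$ and the vanishing of the boundary term by dominated convergence is also correct, and since $\rho\,v$ and $\sigma\,u$ are simply substituted into the form, no illegitimate interchange of limits is needed. Your approach buys self-containedness (no appeal to Pedersen's theorem), an explicit formula for $Z$, a strictly better constant ($\tfrac12\max(\tfrac{1}{1-t},\tfrac{\lambda}{t})$ versus the paper's $\tfrac32\max$), and it only uses boundedness rather than trace-class of $\rho,\sigma$; the cost is the care needed to make the weak integral rigorous, which you correctly identify and address. The paper's route is shorter on the page and additionally produces $Z$ as a block of a \emph{positive} operator, a structural fact your construction does not provide (but which the lemma does not require).
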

\begin{proof}
See Appendix \ref{sec:lyapeq}.
\end{proof}
This completes the proof of Proposition~\ref{prop:varparallelsum}.
\end{proof}

\subsection{Rational lower bounds on the logarithm}
\label{sec:approx_log}

The quantum relative entropy is defined as the $F$-quasi relative entropy for $F(x,y) = y \log_2 y - y \log_2 x$. In this section we define a sequence of rational upper bounds on $F$ that are expressed as a finite sum of the functions $F_t$ for $t \in (0,1]$.

The natural logarithm function (written $\ln$) has the following integral representation
\begin{equation}
	\label{eq:logint}
	\ln(x) = \int_{0}^{1} f(t,x) dt
\end{equation}
where
\begin{equation}
	\label{eq:def_ftx}
	f(t,x) = \frac{x-1}{t(x-1)+1}.
\end{equation}
To get a rational approximation of $\ln$, we can discretize the integral \eqref{eq:logint} with nodes $t_1,\ldots,t_m \in [0,1]$ and weights $w_1,\ldots,w_m > 0$ to get a function
\begin{equation}
r(x) = \sum_{i=1}^m w_i f(t_i,x).
\end{equation}
For example, in \cite{log_approx} the authors used Gaussian quadrature to choose the weights and nodes of $r(x)$. The resulting function agreed with the first $2m+1$ derivatives (i.e., derivatives $0,\ldots,2m$) of the logarithm function. However, this approximation to the logarithm is unsuitable for the current work as it is not a global lower bound on $\ln$.\footnote{In fact, it is neither a lower bound nor an upper bound.}

It turns out though that if we use Gauss-Radau quadrature then we get rational functions that are global bounds on $\ln$. Gauss-Radau quadrature~\cite[p.103]{davis1984methods}
is a variant of Gaussian quadrature where one of the endpoints of the integral interval is required to be a node of the quadrature formula.
\begin{theorem}[Gauss-Radau quadrature]
	\label{thm:gauss-radau-quad}
	For any integer $m \geq 1$, there exist nodes $t_1,\ldots,t_{m-1} \in (0,1)$ and weights $w_1,\ldots,w_m > 0$ such that the quadrature formula
	\begin{equation}
		\label{eq:gaussradau1}
		\int_{0}^{1} g(t) dt \approx \sum_{i=1}^{m-1} w_i g(t_i) + w_m g(1)
	\end{equation}
	holds with equality for all polynomials $g$ of degree up to $2m-2$. Furthermore $w_m = 1/m^2$.
\end{theorem}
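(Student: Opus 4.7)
The plan is to follow the classical construction of Gauss--Radau quadrature with the right endpoint fixed. The free nodes will arise as the zeros of an orthogonal polynomial on $[0,1]$ with respect to a positive weight, the weights will then be pinned down by interpolation, and the promised exactness will follow from a single polynomial-division identity.

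First, I would let $\omega \in \RR[t]$ be the degree-$(m-1)$ polynomial, unique up to scaling, satisfying
\[
\int_0^1 (1-t)\, \omega(t)\, p(t)\, dt = 0 \qquad \text{for every polynomial } p \text{ with } \deg p \leq m-2.
\]
Existence and uniqueness follow from Gram--Schmidt applied to $\{1,t,\dots,t^{m-1}\}$ with the positive inner product $\langle p,q\rangle := \int_0^1 (1-t) p(t) q(t)\, dt$, and the standard sign-change argument for orthogonal polynomials on an interval with a positive weight shows that $\omega$ has $m-1$ simple zeros lying in $(0,1)$. Take these zeros to be $t_1,\dots,t_{m-1}$ and set $t_m := 1$.

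Next, define the weights $w_1,\dots,w_m$ to be the unique solution of the $m\times m$ Vandermonde system enforcing $\sum_{i=1}^m w_i g(t_i) = \int_0^1 g(t)\, dt$ for all $g$ of degree $\leq m-1$; the system is invertible since the $t_i$ are distinct. To boost exactness to degree $2m-2$, given any $g$ of that degree I would perform polynomial division to write
\[
g(t) = (t-1)\omega(t)\, q(t) + r(t), \qquad \deg q \leq m-2, \quad \deg r \leq m-1.
\]
The product $(t-1)\omega(t) q(t)$ vanishes at every node, so the quadrature sum for $g$ equals that for $r$; on the integral side, $\int_0^1 (t-1)\omega(t)q(t)\, dt = -\int_0^1 (1-t)\omega(t)q(t)\, dt = 0$ by the orthogonality characterizing $\omega$. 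Thus $\int_0^1 g = \int_0^1 r$, which is handled by construction. Positivity of the weights then follows by plugging the squared Lagrange basis polynomials $\ell_i^2$ (of degree $2m-2$, satisfying $\ell_i(t_j)=\delta_{ij}$) into the now-exact formula, yielding $w_i = \int_0^1 \ell_i(t)^2\, dt > 0$.

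For the explicit value $w_m = 1/m^2$, I would test the exactness formula on $g(t) = \omega(t)^2 / \omega(1)^2$, which has degree $2m-2$ and vanishes at every $t_i$ for $i<m$, giving
\[
w_m = \frac{\int_0^1 \omega(t)^2\, dt}{\omega(1)^2}.
\]
The change of variables $s = 2t - 1$ turns the orthogonality condition into orthogonality on $[-1,1]$ with respect to the Jacobi weight $(1-s)^1(1+s)^0$, identifying $\omega(t)$ with the shifted Jacobi polynomial $P_{m-1}^{(1,0)}(2t-1)$ up to a constant; equivalently, our rule becomes classical Gauss--Radau on $[-1,1]$ with fixed node $s=1$, whose textbook endpoint weight is $2/m^2$, and the Jacobian $1/2$ delivers $w_m = 1/m^2$.

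The main obstacle is precisely this final step: the exactness argument, the positivity via squared Lagrange polynomials, and the Vandermonde determination of the weights are all routine, but nailing down the constant $w_m = 1/m^2$ requires either a careful normalization of the shifted Jacobi polynomial $P_{m-1}^{(1,0)}$ (using its known value at $1$ and its $L^2((1-t)\,dt)$-norm) or a careful invocation of the standard closed-form endpoint weight for Gauss--Radau on $[-1,1]$.
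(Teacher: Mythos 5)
Your construction is correct, and it is the standard textbook derivation of Gauss--Radau quadrature. Note that the paper does not actually prove this theorem: it cites Davis--Rabinowitz for the existence of the nodes and weights and for the value $w_m=1/m^2$, so there is no internal proof to compare against; your argument supplies exactly what the citation covers. The orthogonality with respect to the weight $(1-t)$, the division $g=(t-1)\omega q+r$ giving exactness up to degree $2m-2$, and the positivity of the weights via the squared Lagrange polynomials are all sound (your degree counts check out, and $\omega(1)\neq 0$ since all zeros of $\omega$ lie in $(0,1)$, so the final test function is well defined). The only step that leans on an external fact is, as you say, the constant: you quote the classical endpoint weight $2/m^2$ on $[-1,1]$. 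If you want the argument to be fully self-contained, the identity $w_m=\int_0^1\omega(t)^2\,dt\,/\,\omega(1)^2$ that you derived can be evaluated directly, e.g.\ by writing $\omega(t)$ up to normalization as $\bigl(P_{m-1}(2t-1)+P_m(2t-1)\bigr)/(2t)$ in terms of Legendre polynomials, or simply by invoking uniqueness of the degree-$(2m-2)$-exact rule with a node pinned at the endpoint, which identifies your rule with the classical one after the affine change of variables. A quick sanity check of your formula at $m=1$ and $m=2$ (where $\omega(t)=t-\tfrac13$) gives $w_m=1$ and $w_m=\tfrac14$ respectively, consistent with $1/m^2$.
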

We note that such nodes and weights can be expressed in terms of properties of Legendre polynomials~\cite[p.103]{davis1984methods} and can be numerically computed efficiently~\cite{golub1973some}. 
If we use this quadrature formula in Theorem~\ref{thm:gauss-radau-quad} for the integral representation of the logarithm \eqref{eq:logint} we get the following rational approximation of $\ln$:
\begin{align}
	\label{eq:def_rm}
	r_m(x) := \sum_{i=1}^{m-1} w_i f(t_i,x) + w_m f(1,x) = \sum_{i=1}^{m-1} w_i \frac{x - 1}{t_i(x-1) + 1} + w_m \frac{x-1}{x} . 
\end{align}
The next proposition shows that the rational functions $r_m$ are lower bounds on $\ln(x)$ that monotonically converge to $\ln(x)$.
\begin{proposition}
	\label{prop:rm1props}
	For any positive integer $m$, the function $r_m$ satisfies
	\begin{enumerate}
		\item $\ln(x) - r_m(x) = O((x-1)^{2m})$ for $x \rightarrow 1$,
		\item $r_m(x) \to \ln(x)$ as $m \to \infty$ for any $x > 0$,
		\item For any $x > 0$, $r_1(x) = 1-1/x \leq r_m(x) \leq r_{m+1}(x) \leq \ln(x)$
	\end{enumerate}
\end{proposition}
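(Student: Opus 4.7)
The plan is to treat the three items in turn, saving the most delicate point for last. For item (1), I would expand $f(t,x)$ in powers of $(x-1)$: for $t\in[0,1]$ and $|x-1|<1$ the geometric series gives $\frac{1}{t(x-1)+1}=\sum_{k\ge 0}(-1)^k t^k (x-1)^k$, so $f(t,x)=\sum_{k\ge 0}(-1)^k t^k (x-1)^{k+1}$. Integrating term by term recovers the Taylor series $\ln(x)=\sum_{k\ge 0}\frac{(-1)^k}{k+1}(x-1)^{k+1}$, while applying the Gauss-Radau rule gives $r_m(x)=\sum_{k\ge 0}(-1)^k\bigl(\sum_{i=1}^{m-1}w_i t_i^k+w_m\bigr)(x-1)^{k+1}$. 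Exactness of the quadrature on polynomials of degree at most $2m-2$ forces the bracketed sum to equal $\frac{1}{k+1}$ for $k=0,\ldots,2m-2$, so the first $2m-1$ Taylor coefficients of $r_m$ and $\ln$ at $x=1$ coincide, yielding $\ln(x)-r_m(x)=O((x-1)^{2m})$.

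For item (2), fix $x>0$ and note that $t\mapsto f(t,x)$ is continuous on $[0,1]$ because $t(x-1)+1=(1-t)+tx>0$. The Gauss-Radau weights are positive, so the standard convergence theorem for positive-weight interpolatory quadratures (proved via Weierstrass approximation) gives $r_m(x)\to\int_0^1 f(t,x)\,dt=\ln(x)$.

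For item (3), the identity $r_1(x)=1-1/x$ is a direct substitution, since the sum over $i$ is empty and $w_1=1/1^2=1$, so $r_1(x)=f(1,x)=(x-1)/x$. For $r_m(x)\le\ln(x)$, I would invoke the Gauss-Radau error formula: for $g\in C^{2m-1}[0,1]$ there exists $\xi\in(0,1)$ with
\[
\int_0^1 g(t)\,dt - \sum_{i=1}^{m-1}w_i g(t_i) - w_m g(1)=\frac{g^{(2m-1)}(\xi)}{(2m-1)!}\int_0^1 (t-1)\prod_{i=1}^{m-1}(t-t_i)^2\,dt.
\]
An easy induction gives $\partial_t^{2m-1} f(t,x)=-\frac{(2m-1)!\,(x-1)^{2m}}{(t(x-1)+1)^{2m}}\le 0$ for all $x>0$ and $t\in[0,1]$, while the kernel $(t-1)\prod_i(t-t_i)^2$ is $\le 0$ on $[0,1]$; the product of the two nonpositive factors is therefore nonnegative and $\ln(x)-r_m(x)\ge 0$.

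The main obstacle is the monotonicity $r_m(x)\le r_{m+1}(x)$. My first attempt would be the Hermite-interpolation representation of Gauss-Radau, $r_m(x)=\int_0^1 H_m(t)\,dt$, where $H_m$ is the unique polynomial of degree at most $2m-2$ matching $f(\cdot,x)$ at $t_1,\ldots,t_{m-1},1$ and matching $\partial_t f(\cdot,x)$ at $t_1,\ldots,t_{m-1}$; the same sign argument as above shows the pointwise inequality $f(t,x)\ge H_m(t)$ on $[0,1]$, which re-derives the upper bound by $\ln(x)$. Since the Gauss-Radau nodes at levels $m$ and $m+1$ are not nested, a direct pointwise comparison of $H_m$ and $H_{m+1}$ does not immediately settle the question, and I would proceed either by a Peano-kernel analysis of the combined functional $r_{m+1}-r_m$, or by invoking the classical theory of Stieltjes-transform convergents applied to $\ln(x)=\int_0^1\frac{dt}{t-a}$ with $a=1/(1-x)\notin[0,1]$, where the Gauss-Radau convergents with the right endpoint fixed are known to form a monotonically increasing sequence of lower bounds precisely because $\partial_t^{2m-1} f(t,x)$ is sign-definite.
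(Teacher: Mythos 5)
Items (1) and (2) of your proposal coincide with the paper's own argument: the termwise comparison of Taylor coefficients at $x=1$ using exactness of the quadrature on polynomials of degree up to $2m-2$, and the Weierstrass/positive-weights convergence argument, are exactly what the paper does. Your derivation of $r_m(x)\le\ln(x)$ via the Gauss--Radau error formula, with the sign-definite kernel $(t-1)\prod_{i=1}^{m-1}(t-t_i)^2\le 0$ on $[0,1]$ and the sign-definite derivative $\partial_t^{2m-1}f(t,x)=-\frac{(2m-1)!\,(x-1)^{2m}}{(t(x-1)+1)^{2m}}\le 0$, is a genuinely different and more self-contained route: the paper obtains $r_m\le\ln$ only as a corollary of the monotonicity $r_m\le r_{m+1}$ combined with the pointwise convergence of item (2), whereas your argument for the upper bound does not depend on monotonicity at all. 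The computation of $r_1$ is also fine.

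The gap is the monotonicity $r_m(x)\le r_{m+1}(x)$, which you explicitly leave unresolved. Of the two strategies you sketch, the Peano-kernel analysis of the combined functional $E=Q_{m+1}-Q_m$ is unlikely to close the gap as stated: $E$ annihilates only polynomials of degree $\le 2m-2$, so the relevant Peano kernel is of order $2m-1$, and the Peano kernel of a difference of two distinct positive quadrature rules with the same degree of exactness is in general not of one sign --- sign-definiteness here is a special structural fact that must be proved, not assumed. Your second strategy --- the classical monotonicity of Gauss/Gauss--Radau approximants to the Markov function $\ln(x)=\int_0^1\frac{dt}{t-a}$ with $a=1/(1-x)\notin[0,1]$ --- is the correct one and is essentially what the paper does: it passes to the reflected functions $\bar r_m(x)=-r_m(x^{-1})$, identifies them as $(m,m-1)$ Pad\'e approximants of $\ln$ at $x=1$, and cites the identity $\bar r_m(x)-\bar r_{m+1}(x)=(x-1)^{2m}/S(x)$ with $S$ a polynomial positive on $(0,\infty)$ (Tops{\o}e, Eq.~(27)), from which $\bar r_m\ge\bar r_{m+1}$ and hence $r_m\le r_{m+1}$ follow. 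To complete your proof you would need either to supply and verify a precise reference for the Markov--Stieltjes monotonicity in the Gauss--Radau (fixed right endpoint) setting, or to establish an identity of the Tops{\o}e type directly; as written, this step is an appeal to a result you have not stated or proved.
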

\begin{proof}
	See Appendix \ref{sec:proofsrm}.
\end{proof}

\begin{remark}
\label{rem:connectionpade}
\begin{itemize}
\item One can obtain quantitative bounds on the approximation $\ln(x) - r_m(x)$ showing that the convergence is uniform on any compact segment $[\epsilon,\epsilon^{-1}] \subset (0,\infty)$ with a convergence rate $\approx (1-\sqrt{\epsilon})^{2m}$. This can be proved using similar tools as \cite[Proposition 6]{log_approx}. { Furthermore, one can also show relative approximation error bounds that hold for all $x > 0$, such as
\begin{equation}
0 \leq \ln(x) - r_m(x) \leq \frac{1}{m^2} \frac{(x-1)^2}{x} \quad \forall x > 0.
\end{equation}
See \cite[Proposition 2.2]{squashedsdp} and \cite{oisinpaper} for more details.}
\item Since $\ln$ satisfies $\ln(x^{-1}) = -\ln(x)$ it is natural to ask what is $r_m(x^{-1})$? Note that the rational functions $f(t,x)$ satisfy $f(t,x^{-1}) = -f(1-t,x)$. As such one can verify that $r_m$ satisfies
\begin{equation}
\label{eq:rmrmbar}
r_m(x^{-1}) = -\bar{r}_m(x)
\end{equation}
where $\bar{r}_m(x)$ is the rational function obtained by applying an $m$-point Gauss-Radau quadrature on \eqref{eq:logint} with an endpoint at 0 (instead of 1). The resulting function $\bar{r}_m(x)$ is an \emph{upper bound} on $\ln(x)$ and also satisfies $\bar{r}_m(x) - \ln(x) = O((x-1)^{2m})$ for $x\to 1$. Noting that $f(0,x) = x-1$ we see that $\bar{r}_m(x)$ is a $(m,m-1)$ rational function\footnote{I.e., it can be written as the ratio of two polynomials $F/G$ where $\deg F = m$ and $\deg G = m-1$}, and as such it is the corresponding Pad\'e approximant of $\ln$ at $x=1$. Many properties of $\bar{r}_m(x)$ are discussed in detail in \cite{topsoe2007some}.
\item { One can obtain similar rational approximations for Petz divergences. The function $x\mapsto 1-x^{-\alpha}$ is operator monotone for $\alpha \in (0,1)$, and has the following integral representation
\begin{equation}
\label{eq:xalpha}
1-x^{-\alpha} = \int_{0}^{1} \frac{x-1}{1+t(x-1)} d\mu_{\alpha}(t)
\end{equation}
where $d\mu_{\alpha}(t) = \frac{\sin(\alpha \pi)}{\pi} t^{\alpha} (1-t)^{-\alpha}$. To obtain rational approximations for $1-x^{-\alpha}$, we can also apply the Gauss-Radau quadrature for the integral \eqref{eq:xalpha}, this time with respect to the measure $d\mu_{\alpha}(t)$ (see e.g., \cite{gautschi2000gauss}). This yields nodes $\{t_{\alpha,i}\}$ and weights $\{w_{\alpha,i}\}$ for $i=1,\ldots,m$, with $t_{\alpha,m}=1$ that can be used to define the rational function
\begin{equation}
r_{\alpha,m}(x) = \sum_{i=1}^m w_{\alpha,i} f(t_{\alpha,i},x).
\end{equation}
In \cite{oisinpaper} it is shown that the functions $r_{\alpha,m}(x)$ satisfy $r_{\alpha,m}(x) \leq r_{\alpha,m+1}(x) \leq 1-x^{-\alpha}$ for all $x > 0$. Furthermore, if one uses the Gauss-Radau quadrature with an endpoint fixed at $t=0$ (instead of $t=1$), then the resulting rational functions are upper bounds on $1-x^{-\alpha}$. We refer to \cite{oisinpaper} for further properties about these rational functions, including bounds on the approximation error as a function of $m$.
}
\end{itemize}
\end{remark}

\subsection{Variational expressions approximating the quantum relative entropy}

We now state and prove the main result of this work, a sequence of converging variational upper bounds on the relative entropy.

\begin{theorem}\label{thm:D_upper_bounds}
	Let $\rho, \sigma$ be two positive semidefinite linear functionals on a von Neumann algebra $\cA$ such that $Q_2(\rho\|\sigma) < \infty$ (see Definition~\ref{def:alpha-quasi}).
	Then for any $m \in \NN$ and the choice of $t_1,\dots t_m \in (0,1]$ and $w_1,\dots,w_m > 0$ as in Theorem~\ref{thm:gauss-radau-quad}, we have
	\begin{equation}
	\label{eq:DubvN}
		D(\rho\|\sigma) \leq 
		-\sum_{i=1}^m \frac{w_i}{t_i \ln 2}\,\inf_{a \in \cA} \left\{\rho(I) + \rho(a+a^*) + (1-t_i) \rho(a^* a) + t_i\sigma(aa^*) \right\} \ .
	\end{equation}
	Moreover, the RHS converges to $D(\rho \|\sigma)$ as $m \to \infty$.
	
	In the special case where $\cA = B(H)$ for a separable Hilbert space $H$ and $\rho$ and $\sigma$ are defined via trace-class operators on $H$ (also denoted by $\rho$ and $\sigma$) satisfying $\rho \leq \lambda \sigma$ for some $\lambda \in \RR_+$, we can give an explicit bound on the norm of the operators appearing in the optimization: 
	\begin{equation}\label{eq:D_ub_boundedops_vN}
		D(\rho\|\sigma) \leq 
		-c_m -\sum_{i=1}^{m-1} \frac{w_i}{t_i \ln 2}\,\inf_{Z \in B(H), \| Z \| \leq \alpha_i} \left\{ \tr{\rho (Z+Z^*)} + (1-t_i) \tr{\rho Z^* Z} + t_i \tr{\sigma Z Z^*} \right\} \ ,
	\end{equation}
	where $c_m = \tr{\rho}\left( \sum_{i=1}^m \frac{w_i }{t_i \ln 2} - \frac{\lambda}{m^2\ln 2}\right)$ and $\alpha_i = \frac{3}{2} \max\{\frac{1}{t_i}, \frac{\lambda}{1-t_i}\}$.
	Moreover, the RHS converges to $D(\rho \|\sigma)$ as $m \to \infty$.
	\end{theorem}
	
\begin{remark}
This result can be seen as computationally tractable approximations of the variational expression for the quantum relative entropy due to Kosaki~\cite{Kos86}, a variant of which is also present in the work of Donald~\cite{Don86}; we refer to the book~\cite{OP04} for more information on these expressions.  For example in Donald~\cite[Section 4]{Don86} it is shown that
\begin{equation}
D(\rho \| \sigma) = -\inf_{\substack{0 < \delta < \epsilon < 1\\ a(t),b(t) \in \cA\\ a(t)+b(t)=I}} \int_{\delta}^{\epsilon} \rho\left(\frac{1}{t} a(t)a(t)^* - b(t)b(t)^*\right) + \sigma(b(t)^* b(t)) dt,
\end{equation}
where the functions $a(t),b(t):(\delta,\epsilon)\to \cA$ are piecewise constant. Importantly however, each one of our approximations is guaranteed to be an upper bound on the relative entropy.
\end{remark}

	\begin{proof}
		The relative entropy $D$ is defined via the quasi-relative entropy $D_F$ for the function $F(x,y) = y \log_2 y - y \log_2 x$. Using the property of $\nu_{\rho, \sigma}$~\eqref{eq:def_div_int}, we have
		\begin{equation}
		\begin{aligned}
			D(\rho \| \sigma) 
			&= - \frac{1}{\ln 2} \int_{\RR_+^2} y \ln(x/y) \mathrm{d}\nu_{\rho, \sigma}(x,y) \ .
		\end{aligned}
		\end{equation}
		Using Proposition~\ref{prop:rm1props}, we can lower bound $\ln(x/y) \geq r_m(x/y)$ (defined in~\eqref{eq:def_rm} and nodes and weights in~\eqref{thm:gauss-radau-quad}) obtaining
		\begin{equation}
		\begin{aligned}
			D(\rho \|\sigma) &\leq -  \sum_{i=1}^m \frac{w_i}{\ln{2}} \int_{\RR_+^2} y \frac{x/y - 1}{t_i (x/y - 1) + 1}  \mathrm{d}\nu_{\rho, \sigma} (x,y) \\ 
			&=  -\sum_{i=1}^m \frac{w_i}{\ln{2}} D_{F_{t_i}}(\rho\|\sigma)
		\end{aligned}
		\end{equation}
		where on the second line we have defined the quasi-relative entropy $D_{F_{t_i}}(\rho\|\sigma)$ with $F_{t_i}(x,y) := y\frac{x-y}{t_i(x-y)+y}$. Now applying Proposition~\ref{prop:varparallelsum} to each $D_{F_{t_i}}$ we get 
		\begin{equation}
		\begin{aligned}
			D(\rho\|\sigma) &\leq - \sum_{i=1}^m \frac{w_i}{t_i \ln 2} \inf_{a \in \cA} \left\{\rho(I) + \rho(a+a^*) + (1-t_i) \rho(a^* a) + t_i\sigma(aa^*)\right\} \ .
		\end{aligned}
		\end{equation}
		This concludes the proof of the upper bounds. 
		
		We now turn to the proof of convergence. 
		Let $R_m(x,y) := \sum_{i=1}^m w_i F_{t_i}(x,y)$. Then we have already shown that $D(\rho\|\sigma) \leq \sum_i -\frac{w_i}{\ln 2} D_{F_{t_i}}(\rho \| \sigma)  = \frac{1}{\ln 2} D_{-R_m}(\rho\|\sigma)$. Furthermore, we have
		\begin{equation}
		\label{eq:diffDrmDumegaki}
		\frac{1}{\ln 2} D_{-R_m}(\rho \| \sigma) - D(\rho \| \sigma) = \frac{1}{\ln 2}\int_{\RR_+^2} y \left( \ln(x/y) - r_m(x/y) \right) d\nu_{\rho, \sigma}(x,y) \ .
		\end{equation}
		For any fixed $(x,y) \in \RR^2_+$ we know from Proposition~\ref{prop:rm1props} that $y \left( \ln(x/y) - r_m(x/y) \right)$ converges monotonically to 0 as $m\to \infty$. Thus, by the monotone convergence theorem we get that $\frac{1}{\ln 2} D_{-R_m}(\rho \| \sigma) - D(\rho \| \sigma) \to 0$ as desired.
		
		To obtain the second statement with explicit bounds on the operators, we use the second part of Proposition~\ref{prop:varparallelsum}. It suffices to use the variational expression in~\eqref{eq:var_expr_bound_norm} for the terms $i \in \{1,2,\dots,m-1\}$ instead of the general form. We get:
		\begin{equation}
		\begin{aligned}
			D(\rho\|\sigma) &\leq - \sum_{i=1}^{m-1} \frac{w_i}{t_i \ln 2} \inf_{\substack{Z \in B(H) \\ \| Z \| \leq \alpha_i}} \left\{\tr{\rho} + \tr{\rho(Z+Z^*)} + (1-t_i) \tr{\rho Z^* Z} + t_i \tr{\sigma Z Z^*} \right\} - \frac{w_m}{\ln 2} D_{F_1}( \rho \| \sigma) \ .
		\end{aligned}
		\end{equation}
		It now only remains to find a lower bound on $D_{F_1}( \rho \| \sigma)$ using the property $\rho \leq \lambda \sigma$. Note that
		\begin{equation}
			\begin{aligned}
			D_{F_1}( \rho \| \sigma) &= \inf_{Z \in B(H)} \left\{ \tr{\rho} + \tr{\rho(Z+Z^*)} + \tr{\sigma ZZ^*} \right\} \\
			&\geq \inf_{Z \in B(H)} \left\{ \tr{\rho} + \tr{\rho(Z+Z^*)} + \frac{1}{\lambda} \tr{\rho ZZ^*} \right\} \\
			&= \inf_{Z \in B(H)} \left\{ \tr{\rho} - \lambda \tr{\rho} + \tr{\rho (\sqrt{\lambda} I + \frac{1}{\sqrt{\lambda}} Z ) (\sqrt{\lambda} I + \frac{1}{\sqrt{\lambda}} Z )^*} \right\} \\
			&= \tr{\rho} - \lambda \tr{\rho} \ ,
			\end{aligned}
		\end{equation} 
		which together with the fact that $w_m = \frac{1}{m^2}$ leads to the desired bound.
	\end{proof}

\section{Conclusion}

In this work we derived a converging sequence of upper bounds on the relative entropy between two positive semidefinite linear functionals on a von Neumann algebra. We then demonstrated how to use this sequence of upper bounds to derive a sequence of lower bounds on the conditional von Neumann entropy. The resulting optimization problems could then be relaxed to a convergent hierarchy of semidefinite programs using the NPA hierarchy. Overall this gives a computationally tractable method to compute lower bounds on the rates of DI-RE and DI-QKD protocols. 

We applied our method to compute lower bounds on the asymptotic rates of various device-independent protocols. We compared the rates derived with our technique to other numerical techniques~\cite{TSGPL19, BFF21, BRC21} and to known tight analytical bounds~\cite{PABGMS, GMKB21b, WAP20}. We found everywhere substantial improvements over the previous numerical techniques and we also demonstrated that our technique could recover all known tight analytical bounds, making it the first general numerical technique to do so and showing that the method can converge quickly. Compared with the previous numerical techniques, not only does our technique derive higher rates but it can also be much faster due to the fact that the noncommutative polynomial optimization that we derive is of low degree and can be ran without additional operator inequalities  without affecting the rates in a substantial way. Furthermore, our derivation of this optimization problem was done in the general setting of bounded operators on a separable Hilbert space, allowing it to compute the rates of device-independent protocols even when the systems are infinite dimensional. This is in contrast to the previous general numerical techniques of~\cite{TSGPL19, BFF21} that assumed finite dimensional Hilbert spaces in their analyses.

Computing key rates for DI-QKD, we also found significant improvements on the minimal detection efficiency required to generate secret key using a pair of entangled qubits. In particular we found a minimal detection efficiency lower than $0.8$ which is now well within the regime of current device-independent experiments~\cite{diexperiment1,diexperiment2,diqkd_exper1,diqkd_exper2,diqkd_exper3}. With further effort we can hope to increase the rates of these experiments further and we leave a thorough investigation of this question to future work.

We also demonstrated that min-tradeoff functions, necessary for the entropy accumulation theorem~\cite{DFR,DF}, could be derived directly from our numerical computations. Therefore our technique can also be used to compute the rates of actual finite round protocols and to subsequently prove their security. For example, this then opens the possibility of using the global randomness bounds (see Figure~\ref{fig:chsh_global_comparison}) for the CHSH game to improve the rates of the recent DI-RE experiments~\cite{diexperiment1,diexperiment2}.

There remain a few open questions from this work. 
It would be interesting to investigate how we could make our computations more efficient. In particular, when exploring more complex scenarios the complexity of the SDPs can quickly grow (see the difference in runtimes between Fig.~\ref{fig:chsh_comparison} and Fig.~\ref{fig:chsh_global_comparison}). To combat this one could search for other variational forms that converge faster to the von Neumann entropy. Otherwise, what are the best monomial sets to include in our relaxations or can we exploit symmetries to reduce the size of the SDPs~\cite{R18}? When computing the NPA relaxations we made various simplifications to the problem (see Remark~\ref{rem:faster_computations} and the captions of the various figures), in what settings can these simplifications be made without losing tightness for the rate curves? In a different direction, one could also look at applying our bounds on the relative entropy to bound other entropic quantities.

\section*{Acknowledgements}
We thank Roger Colbeck and Federico Grasselli for supplying data and information that facilitated some of the comparisons in the plots. We also thank Mateus Araújo for feedback on an earlier version of the manuscript, we thank Jean-Daniel Bancal for useful discussions that led to some improved numerics and we thank Glaucia Murta for pointing out an error in a previous version.  This work is funded by the European Research Council (ERC Grant AlgoQIP, Agreement No. 851716).

\bibliographystyle{quantum}
\bibliography{tradeoff}

\appendix

\section{Proof of Proposition \ref{prop:rm1props} on the rational functions $r_m$}
\label{sec:proofsrm}

	\textit{1}. We start by proving that $\ln(x) - r_m(x) = O((x-1)^{2m})$. We can write the Taylor expansion of $f(t,x)$ (defined in~\eqref{eq:def_ftx}) about $x=1$ as \begin{equation}
	f(t,x) = (x-1)\sum_{k=0}^{\infty} (-1)^k (x-1)^k t^k,
	\end{equation}
	which is valid for $|t(x-1)| < 1$ and similarly for $\ln(x) = (x-1) + \sum_{k=1}^\infty (-1)^k \frac{(x-1)^{k+1}}{k+1}$ which is valid for $|x-1|<1$. Let $\nu_m = \sum_{i=1}^m w_i \delta_{t_i}$ be the discrete measure coming from the Gauss-Radau quadrature, then we have for all $|x-1|<1$
	\begin{equation}
		\label{eq:log-rseries}
		\begin{aligned}
			\ln(x) - r_m(x) &= (x-1) \sum_{k=0}^{\infty} (-1)^k (x-1)^k \left( \int_{0}^1 t^k dt - \int_{0}^1 t^k d\nu_m(t) \right) \\
			&= (x-1)\sum_{k=2m-1}^{\infty} (-1)^k (x-1)^k \left( \int_{0}^1 t^k dt - \int_{0}^1 t^k d\nu_m(t) \right) \\
			&= (x-1)\sum_{k=0}^{\infty} (-1)^{k + 2m-1} (x-1)^{k+2m-1} \left( \int_{0}^1 t^{k+2m-1} dt - \int_{0}^1 t^{k+2m-1} d\nu_m(t) \right) \\
			&= (x-1)^{2m}\sum_{k=0}^{\infty} (-1)^{k-1} (x-1)^{k} \left( \int_{0}^1 t^{k+2m-1} dt - \int_{0}^1 t^{k+2m-1} d\nu_m(t) \right) \\
		\end{aligned}
	\end{equation}
	where on the second line we used the fact that the Gauss-Radau formula \eqref{eq:gaussradau1} is exact for all polynomials of degree up to $2m-2$ means that the terms $k=0,\ldots,2m-2$ in the sum are equal to 0. Thus this gives $\ln(x) - r_m(x) = O((x-1)^{2m})$.
	
	\textit{2}. For any fixed $x > 0$ the function $t\mapsto f(t,x)$ is continuous on $[0,1]$ and so can be approximated arbitrarily closely over $[0,1]$ by polynomials. The claim then follows from the fact that the $m$-point quadrature rule is exact for all polynomials of degree $2m-2$.
	
	\textit{3}. This item can be shown using the fact mentioned in Remark \ref{rem:connectionpade} that $r_m(x) = -\bar{r}_m(x^{-1})$ where $\bar{r}_m$ is the $(m,m-1)$ Pad\'e approximant of $\ln$ at $x=1$. Indeed it was shown in \cite[Equation (27)]{topsoe2007some} that the functions $\bar{r}_m$ satisfy $\bar{r}_m(x) - \bar{r}_{m+1}(x) = \frac{(x-1)^{2m}}{S(x)}$ where $S(x)$ is a polynomial positive on $(0,\infty)$. This shows that $\bar{r}_m(x) \geq \bar{r}_{m+1}(x)$ and as such $r_m(x) \leq r_{m+1}(x)$. Since $\lim_{m\to \infty} r_m(x) = \ln(x)$, this also shows $r_m(x) \leq \ln(x)$.
	
	\if0
	Since $f(1,x) = (x-1)/x$ and $t=1$ is always a node in our Gauss-Radau quadrature we see that $0$ is a pole of $r_m(x)$. This fact, combined with $\ln(x) - r_m(x) = O((x-1)^{2m})$ allows us to show that $\ln(x) \geq r_m(x)$. Indeed, write $r_m(x) = F(x)/G(x)$ where $F$ and $G$ are polynomials of degree $m$ and with $G(0) = 0$. Let $\delta(x) = \ln(x) - r_m(x)$. Differentiating and noting the identity $\delta(x) = O((x-1)^{2m})$ we get that $\delta'(x) = \frac{1}{x} - \frac{F' G - F G'}{G^2} = O((x-1)^{2m-1})$, i.e., $G^2 - x(F'G - FG') = O((x-1)^{2m-1})$. Since $G(0) = 0$ we can write $G(x) = xH(x)$ which gives $xH^2 -(F'G - FG') = O((x-1)^{2m-1})$. Since $xH^2 - (F'G - FG')$ has degree at most $2m-1$ it must be that it is $a (x-1)^{2m-1}$ for some constant $a$. This means that
	$\delta'(x) = \frac{a(x-1)^{2m-1}}{xH^2}$. With $\delta(1) = 0$ this means that $\delta(x)$ has constant sign for all $x > 0$, and is same sign as $a$. We need to show that $a > 0$. Since $r_m(x)$ is a rational function with both $F$ and $G$ have degree $m$, we know that $\lim_{x\rightarrow\infty} r_m(x) < \infty$ and so $\delta(x) = \ln(x) - r_m(x) > 0$ for large enough $x$. This means that $a > 0$, and $\ln(x) \geq r_m(x)$ for all $x > 0$, as desired.
	
	\textit{3}. Since $f(t,x)$ is decreasing with $t$, one also has the bound $r_m(x)\geq 1-1/x$.
	
	\textit{4.} 
	The proof uses the same arguments as in the proof of \cite[Proposition 6]{log_approx}. We sketch this proof below. For convenience we do a change of variables $[0,1]\to [-1,1]$ by letting $u = 1-2t$. Then we have
	\begin{equation}
	\ln(x) = \int_{-1}^{1} \phi(u,x) (du/2)
	\end{equation}
	where $\phi(u,x) = \frac{2}{\frac{x-1}{x+1}-u}$. Our approximation is
	\begin{equation}
	r_m(x) = \sum_{i=1}^m \omega_i \phi(u_i,x)
	\end{equation}
	and the $\omega_i, u_i$ satisfy the following:
	\begin{equation}
	\int_{-1}^{1} p(u) (du/2) = \sum_{i=1}^m \omega_i p(u_i)
	\end{equation}
	for any polynomial $p$ of degree at most $2m-2$.
	We have the Chebyshev series representation of $\phi(u,x)$:
	\begin{equation}
	\phi(u,x) = \sum_{k=0}^{\infty} a_k(x) T_k(u)
	\end{equation}
	where $a_k(x) = 2(\sqrt{x}-\sqrt{1/x})\left(\frac{\sqrt{x}-1}{\sqrt{x}+1}\right)^k$. Then we write
	\begin{equation}
		\label{eq:diffcheb}
		\ln(x) - r_m(x) = \sum_{k=0}^{\infty} a_k(x) \left[ \int_{-1}^{1} T_k(u) (du/2) - \int_{-1}^{1} T_k(u) d\nu_m(u) \right],
	\end{equation}
	where $\nu_m = \sum_{i=1}^m \omega_i \delta_{u_i}$. Using the property of the quadrature points/weights we know that the terms $k=0,\ldots,2m-2$ are equal to 0. Furthermore, since $|T_k|\leq 1$ on $[-1,1]$ and our measures have total weight 1 on $[-1,1]$ we have
	\begin{equation}
	|\ln(x) -r_m(x)| \leq 2 \sum_{k=2m-1}^{\infty} |a_k(x)| = 4 |\sqrt{x}-\sqrt{1/x}| \frac{\rho^{2m-1}}{1-\rho}.
	\end{equation}
	Introducing 
	\begin{equation}
	\delta(x) := 4 \frac{|\sqrt{x}-\sqrt{1/x}|}{1-\rho} = \begin{cases} 2(x-1)(1+1/\sqrt{x}) & \text{ if } x > 1\\
		2(1-x)(1+1/\sqrt{x})/\sqrt{x} & \text{ if } x < 1,
	\end{cases}
	\end{equation}
	the bound can be written as
	\begin{equation}
		\label{eq:boundlogrm1}
		\ln(x) - r_m(x) \leq \delta(x) \rho^{2m-1}
	\end{equation}
	For later it will be useful to observe that
	\begin{equation}
		\label{eq:simple_bound_delta}
		\delta(x) = 2 |x-1| \frac{1+1/\sqrt{x}}{\min(1,\sqrt{x})} \leq 2 |x-1| (1+x^{-1/2} + x^{-1}) \leq 2(x+\sqrt{x}+x^{-1/2} + x^{-1}).
	\end{equation}
\fi

\if0
For readers familiar with standard Gaussian quadrature, we note that the weights and nodes for Gauss-Radau quadrature can be obtained from a Gaussian quadrature in a fairly straightforward manner. The following proposition details this connection. 
\begin{proposition}[Computing weights and nodes]
	Let $\beta_1,\ldots,\beta_{m-1} > 0$ and $t_1,\ldots,t_{m-1} \in (0,1)$ be the $(m-1)$-point Gaussian quadrature rule on $(0,1)$ for the weight function $\omega(t) = 1-t$. Then the Gauss-Radau quadrature on $[0,1]$ with endpoint at $1$ (Equation \eqref{eq:gaussradau1}) is given by the nodes $t_1,\ldots,t_{m-1},1$ and weights $w_i = \beta_i / (1-t_i)$ for $i=1,\ldots,m-1$ and $w_m = 1-\sum_{i=1}^{m-1} w_i$.
\end{proposition}
\begin{proof}
	Let $g$ be a polynomial of degree $2m-2$. We can write $g = (1-t) q + g(1)$ where $\deg q \leq 2m-3$. Then
	\begin{equation}
	\int_{0}^{1} g(t) dt = \int_{0}^{1} q(t) (1-t)dt + g(1) = \sum_{i=1}^{m-1} \beta_i q(t_i) + g(1) = \sum_{i=1}^{m-1} \beta_i \frac{g(t_i)-g(1)}{1-t_i} + g(1) = \sum_{i=1}^{m-1} w_i g(t_i) + w_m g(1).
	\end{equation}
\end{proof}
The proof above works in general (not necessarily the uniform measure on $[0,1]$). For our case however, it turns out that $w_m = 1/m^2$.
\fi

\section{Proof of Lemma \ref{lem:lyapeq} concerning the Sylvester equation \eqref{eq:equation_Z}}
\label{sec:lyapeq}

	We use a result of \cite{pedersen1976operator} to show the existence of a bounded operator $Z$ satisfying~\eqref{eq:equation_Z}. Given the positive semidefinite operators $\rho$ and $\sigma$ on $H$, we construct the operators on the direct sum $H \oplus H$: $X = \begin{pmatrix} (1-t) \rho & 0 \\ 0  & t \sigma \end{pmatrix}$ and $K = \frac{1}{2} \begin{pmatrix} \rho & \rho \\ \rho & \lambda \sigma \end{pmatrix}$. Observe first that both $X$ and $K$ are positive semidefinite operators. In addition, recalling that $\alpha=\max\{ \frac{3}{2(1-t)}, \frac{3\lambda}{2t} \}$, we get
	\begin{equation}
		\begin{aligned}
			\alpha X - K = \begin{pmatrix} (\alpha (1-t) - \frac{1}{2}) \rho & -\rho \\ -\rho & (\alpha t - \frac{\lambda}{2}) \sigma \end{pmatrix} \geq \begin{pmatrix} \rho & -\rho \\ -\rho & \rho \end{pmatrix}  \geq 0 \ .
		\end{aligned}
	\end{equation}
	We are now ready to apply \cite[Theorem 3.1]{pedersen1976operator}. Note that using the fact that $K \leq \alpha X$, the operator monotonicity of the square root function, and the inequality $(a^2 + ax)^{1/2} \leq a + \frac{x}{2}$ for $a,x \geq 0$, we get for any $s > 0$
	\begin{equation}
		\begin{aligned}
			(X^2 + 2s K)^{1/2} - X &\leq (X^2 + 2 \alpha s X)^{1/2} - X\\
			 &= \begin{pmatrix} ((1-t)^2 \rho^2 + 2\alpha s (1-t) \rho)^{1/2} - (1-t) \rho & 0 \\ 0 & (t^2 \sigma^2 + 2\alpha s t \sigma)^{1/2} - t \sigma  \end{pmatrix} \\
		&\leq \begin{pmatrix} \alpha s I & 0 \\ 0 & \alpha s I  \end{pmatrix} \\
		&\leq \alpha s I \ .
		\end{aligned}
	\end{equation}
	Thus we can apply~\cite[Theorem 3.1]{pedersen1976operator} and obtain the existence of a positive semidefinite operator $T$ on $H \oplus H$ with operator norm bounded by $\alpha$ that satisfies:
	\begin{align}
		\label{eq:xtk}
		XT + TX = 2K \ .
	\end{align}
	Writing $T = \begin{pmatrix} T_{11} & Z \\ Z^* & T_{22} \end{pmatrix}$, the equality~\eqref{eq:xtk} implies 
	\begin{align}
		\label{eq:Z-satisfies}
		(1-t) \rho Z + t Z \sigma = \rho
	\end{align}
and we have $\|Z\| \leq \|T\| \leq \alpha$.

\section{The NPA hierarchy}\label{app:npa}
In this section we briefly detail the NPA hierarchy which allows us to relax non-commutative polynomial optimization problems to a hierarchy of semidefinite programs. For a full exposition, we refer the reader to the original work~\cite{NPA_general}.

We begin by recalling the basic definitions of noncommutative polynomial optimization problems. We consider noncommutative polynomials in the variables $x=(x_1,\ldots,x_n)$ and their adjoints $x^*=(x_1^*,\ldots,x_n^*)$. A \emph{word} corresponds to a monomial constructed from the variables above, i.e. products of the variables. We let $\cA_{k}$ be the set of noncommutative polynomials that are linear combinations of words with $|w| \leq k$. Similarly let $\cW_{k}$ be the set of words with $|w| \leq k$. We are interested in a noncommutative polynomial optimization problem of the form
\begin{equation}
	\label{eq:ncopt-constrained}
	\begin{array}{ll}
		c_{\mathrm{opt}} = \displaystyle\inf_{\substack{\cH, \psi \in \cH, \|\psi\|^2=1\\ X_1,\ldots,X_n \in \cB(\cH)}} & \< \psi , p(X) \psi\>\\
		\qquad\qquad\text{subject to} & \langle \psi, r_i(X) \psi \rangle \geq b_i \\
		& q_j(X) \geq 0
	\end{array}
\end{equation}

Let $Q = \{q_j : j =1,\dots, m\}$ be the set of positive polynomials in \eqref{eq:ncopt-constrained}. The \emph{quadratic module} $M_Q$ is the set of all polynomials of the form  $\sum_{ij} f_i^* f_i + \sum_{ij} g_{ij}^* q_j g_{ij}$. If $M_Q$ contains the polynomial $C - \sum_{i=1}^n x_i x_i^* + x_i^* x_i$ for some $C> 0$ then we say the problem is Archimedean. Note that if each of the variables $x_i$ in the problem has an explicit bound on their operator norm, i.e. $x_i^* x_i \leq C_i$  and $x_i x_i^* \leq C_i$ then the problem becomes Archimedean if these constraints are added to the problem. 

A moment relaxation of level $k$ is defined by a positive semidefinite linear functional $L: \cA_{2k} \rightarrow \CC$, i.e. for $f\in \cA_{2k}$ with $f = \sum_{w \in \cW_{2k}} f_w w$ for some $f_w \in \CC$ then $L(f) = \sum_{w\in \cW_{2k,2k}} f_w L(w)$. These positive semidefinite linear functionals are in one-to-one correspondence with so-called \emph{moment matrices of level $k$} $M_k$ which is a positive semidefinite matrix whose rows and columns are indexed by words in $\cW_{k}$ and whose $(v,w)$ entry is given by
\begin{equation}
M_k(v,w) = L(v^* w).
\end{equation}
For each $q \in Q$ we also define the \emph{localizing moment matrix} $M_{k-d_q}^{q}$ where $d_q = \lceil\deg(q)/2\rceil$ as the matrix whose rows and columns are indexed by words in $\cW_{k-d_q}$ and whose $(u,v)$ entry corresponds to
\begin{equation}
M_{k-d_q}^{q}(v,w) = L(v^{*} q w).
\end{equation}
We may then define the $k^{\mathrm{th}}$ level relaxation of \eqref{eq:ncopt-constrained} as
\begin{equation}
	\label{eq:ncopt-krelaxed}
	\begin{aligned}
		c_k = \min& \quad \sum_w p_w L(w) \\
		\mathrm{s.t.}& \quad M_{k}(1,1) = 1 \\ 
		& \quad \sum_w r_{i,w} L(w) \geq b_i \\ 
		& \quad M_{k} \geq 0 \\
		& \quad M_{k-d_q}^{q} \geq 0 \qquad \text{for all }q\in Q
	\end{aligned}
\end{equation}

If the problem is Archimedean then the authors of~\cite{NPA_general} showed that $\lim_{k \to \infty} c_k = c_{\mathrm{opt}}$.

\section{Additional plots}\label{app:additional_plots}

In this section we provide some additional plots that demonstrate the convergence of our technique. In particular we recover instances of other known tight analytical bounds~\cite{GMKB21b,WAP20} in addition to~\cite{PABGMS} which was shown in Figure~\ref{fig:chsh_comparison}.

In Figure~\ref{fig:holz_comparison} we demonstrate that our technique can be used to recover the tight analytical bound~\cite{GMKB21b} on $\inf H(A|X=0,Q_E)$ for devices constrained to violate the Holz inequality~\cite{HKB20} for three parties. Where the Holz inequality is formulated as follows: let Alice, Bob and Charlie each have binary input devices that give outputs in $\{1, -1\}$. Given a projective measurement $\{A_{1|x}, A_{-1,x}\}$ for Alice on input $x$ let $A_x = A_{1|x} - A_{-1|x}$ be the corresponding observable. Define observables $B_y$ and $C_z$ for Bob and Charlie in the same way. Finally, given a pair of observables $\{X_0, X_1\}$ define $X_{\pm} := \frac12 (X_0 \pm X_1)$. Then the Holz Bell-expression for three parties is given as 
\begin{equation}
 \cB_H	= \langle A_1 B_+ C_+ \rangle - \langle A_0 (B_- + C_-) \rangle - \langle B_- C_- \rangle,
\end{equation}
it has a classical bound of $1$ and a quantum mechanical bound of $3/2$. In the figure we see that, in a similar fashion to Figure~\ref{fig:chsh_comparison}, we recover the tight analytical bound at a low NPA relaxation level and rapidly in the size of Gauss-Radau quadrature.
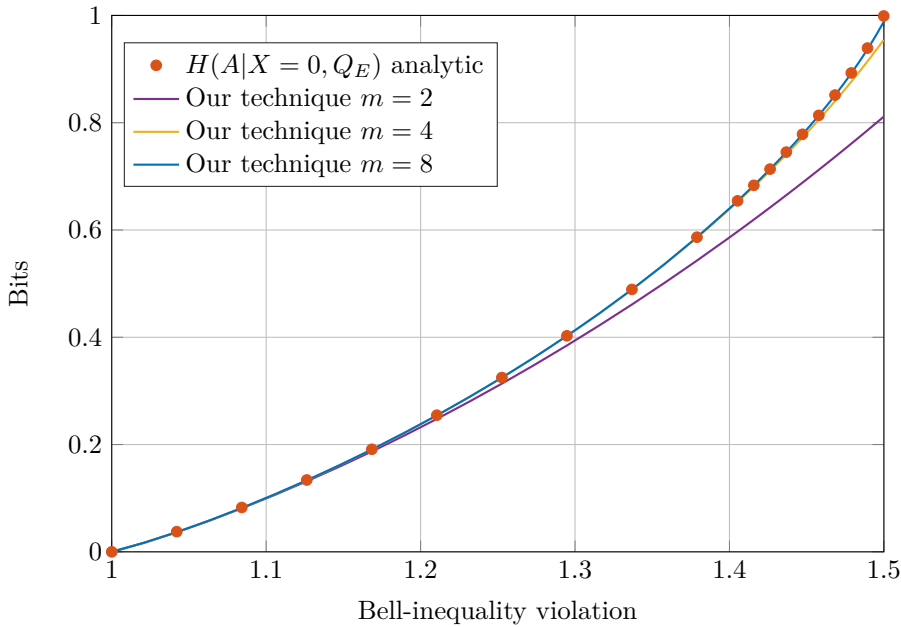
\begin{figure}
	\centering
	\definecolor{mycolor2}{rgb}{0.00000,0.44700,0.74100}%
	\definecolor{mycolor4}{rgb}{0.85000,0.32500,0.09800}%
	\definecolor{mycolor3}{rgb}{0.92900,0.69400,0.12500}%
	\definecolor{mycolor1}{rgb}{0.49400,0.18400,0.55600}%
	\begin{tikzpicture}
		
		\begin{axis}[%
			width=4in,
			height=2.8in,
			scale only axis,
			xmin=1.0,
			xmax=1.5,
			ymin=0,
			ymax=1,
			grid=major,
			xlabel={Bell-inequality violation},
			ylabel={Bits},
			xtick={1.0, 1.1, 1.2, 1.3, 1.4, 1.5},
			axis background/.style={fill=white},
			legend style={at={(0.5,0.95)},legend cell align=left, align=left, draw=white!15!black}
			]
			\addplot[color=mycolor4, line width=0.0pt, mark=*, mark repeat = 2, only marks] table[col sep=comma] {analytic_H_holz_local_tikz.dat};
			\addlegendentry{$H(A|X=0,Q_E)$ analytic}	
			\addplot[color=mycolor1, line width=0.8pt] table[col sep=comma] {kosaki_holz_local_2M_tikz.dat};
			\addlegendentry{Our technique $m=2$}
			\addplot[smooth, color=mycolor3, line width=0.8pt] table[col sep=comma] {kosaki_holz_local_4M_tikz.dat};
			\addlegendentry{Our technique $m = 4$}
			\addplot[color=mycolor2, line width=0.8pt] table[col sep=comma] {kosaki_holz_local_8M_tikz.dat};
			\addlegendentry{Our technique $m = 8$}
		\end{axis}
	\end{tikzpicture}%
	\caption{\textbf{Recovering the tight analytical bound of~\cite{GMKB21b}.} Comparison of lower bounds on $H(A|X=0,Q_E)$ for quantum devices that constrained to achieve some minimal violation of the Holz inequality for three parties~\cite{HKB20}. Numerical bounds were computed at a relaxation level $2 + ABC$. A single SDP takes a few seconds to run at this level.}
	
	\label{fig:holz_comparison}
\end{figure} 

In Figure~\ref{fig:asymchsh_comparison} we look at recovering instances of the family of tight analytical bounds on the asymmetric CHSH inequality~\cite{WAP20}. The asymmetric CHSH inequality generalizes the standard CHSH inequality in the following way. Let the observables for Alice and Bob be defined in the same way as above, then the asymmetric CHSH expression with weight $\alpha \in \RR$ is given as 
\begin{equation}
	\cB_{\alpha} = \alpha( \langle A_0 B_0 \rangle + \langle A_0 B_1 \rangle) + \langle A_1 B_0 \rangle - \langle A_1 B_1 \rangle.
\end{equation}
We see that for $\alpha = 1$ we recover the standard CHSH expression. The maximum classical value is given by $\max\{ 2, 2|\alpha|\}$ and the maximum quantum value is given by $2 \sqrt{1+\alpha^2}$. In the figure we plot two instances of this family, $\alpha = 1.1$ and $\alpha = 0.9$. Similar to the Holz inequality we find that for both values of $\alpha$ we chose, we recover the analytical bound. However, note that in the case where $\alpha = 0.9$ it was slightly more challenging. If we use speedup (3) from Remark~\ref{rem:faster_computations} then for the chosen relaxation level and Gauss-Radau quadratures, we do not recover part of the analytical bound. In particular, with the splitting of the objective function it appears to be unable to recover the tight rate curve. In particular, the construction of the analytical rate curve for $|\alpha| < 1$ in~\cite{WAP20} is split into two parts, a convex function $g(s)$ is first defined which gives the rate curve for the larger violations $s$. Then the actual rate curve $r(s)$ is defined as 
\begin{equation}
	r(s) := \begin{cases}
		g(s) \qquad &\text{if } |\alpha| \geq 1 \text{ or } s \geq s_* \\
		g'(s_*) (|s| - 2) &\text{otherwise}
	\end{cases}
\end{equation}
where $s_* \in [2 \sqrt{ 1+ \alpha^2 - \alpha^4}, 2 \sqrt{1+ \alpha^2}]$ is the unique point such that the tangent of $g$ gives a value of $0$ at $s=2$. Basically, the construction is to take the function $g$ to be the rate curve until the tangent of $g$ intersects the point $(2,0)$ (moving from high scores to low scores) then the remaining rate curve is given by this tangent function. Numerical testing indicates that it is the flat part of the curve that the split objective (speedup (3) in Remark~\ref{rem:faster_computations}) cannot capture. Nevertheless, at the cost of more computational time, we can still run the numerics without speedup (3) and we're able to recover all of the rate curve in this way.
\begin{figure}
	\centering
	\definecolor{mycolor2}{rgb}{0.00000,0.44700,0.74100}%
	\definecolor{mycolor4}{rgb}{0.85000,0.32500,0.09800}%
	\definecolor{mycolor3}{rgb}{0.92900,0.69400,0.12500}%
	\definecolor{mycolor1}{rgb}{0.49400,0.18400,0.55600}%
	\centering
	\subfloat[Rates for asymmetric CHSH inequality ($\alpha = 1.1$).]{
		\begin{tikzpicture}
			
			\begin{axis}[%
				width=4in,
				height=2.8in,
				scale only axis,
				xmin=2.2,
				xmax=3,
				ymin=0,
				ymax=1,
				grid=major,
				xlabel={Bell-inequality violation},
				ylabel={Bits},
				xtick={2.2, 2.4, 2.6, 2.8, 3.0},
				axis background/.style={fill=white},
				legend style={at={(0.5,0.95)},legend cell align=left, align=left, draw=white!15!black}
				]
				\addplot[color=mycolor4, line width=0.0pt, mark=*, mark repeat = 1, only marks] table[col sep=comma] {analytic_H_asymchsh_local_alpha1.1_tikz.dat};
				\addlegendentry{$H(A|X=0,Q_E)$ analytic}	
				\addplot[color=mycolor1, line width=0.8pt] table[col sep=comma] {kosaki_asymchsh_local_alpha1.1_2M_tikz.dat};
				\addlegendentry{Our technique $m=2$}
				\addplot[smooth, color=mycolor3, line width=0.8pt] table[col sep=comma] {kosaki_asymchsh_local_alpha1.1_4M_tikz.dat};
				\addlegendentry{Our technique $m = 4$}
				\addplot[color=mycolor2, line width=0.8pt] table[col sep=comma] {kosaki_asymchsh_local_alpha1.1_8M_tikz.dat};
				\addlegendentry{Our technique $m = 8$}
			\end{axis}
		\end{tikzpicture}%
		\label{fig:test1}
	}
	~\\\
	\subfloat[Rates for asymmetric CHSH inequality ($\alpha = 0.9$).]{
		\begin{tikzpicture}
			
			\begin{axis}[%
				width=4in,
				height=2.8in,
				scale only axis,
				xmin=2,
				xmax=2.7,
				ymin=0,
				ymax=1,
				grid=major,
				xlabel={Bell-inequality violation},
				ylabel={Bits},
				xtick={2.0, 2.1, 2.2, 2.3, 2.4, 2.5,2.6,2.7},
				axis background/.style={fill=white},
				legend style={at={(0.5,0.95)},legend cell align=left, align=left, draw=white!15!black}
				]
				\addplot[color=mycolor4, line width=0.0pt, mark=*, mark repeat = 1, only marks] table[col sep=comma] {analytic_H_asymchsh_local_alpha0.9_tikz.dat};
				\addlegendentry{$H(A|X=0,Q_E)$ analytic}	
				\addplot[color=mycolor1, line width=0.8pt] table[col sep=comma] {asym_chsh_local_2M_alpha0.9_full_tikz.dat};
				\addlegendentry{Our technique $m=2$}
				\addplot[smooth, color=mycolor3, line width=0.8pt] table[col sep=comma] {asym_chsh_local_4M_alpha0.9_full_tikz.dat};
				\addlegendentry{Our technique $m = 4$}
				\addplot[color=mycolor2, line width=0.8pt] table[col sep=comma] {asym_chsh_local_8M_alpha0.9_full_tikz.dat};
				\addlegendentry{Our technique $m = 8$}	
			\end{axis}
		\end{tikzpicture}%
		\label{fig:test2}
	}
	\caption{\textbf{Recovery of the tight analytical bounds from~\cite{WAP20}.} Comparison of lower bounds on $H(A|X=0,Q_E)$ for quantum devices that constrained to achieve some minimal violation of the asymmetric CHSH inequality for different values of the weighting parameter $\alpha$. For subfigure (a) the numerical bounds were computed at a relaxation level $2 + ABZ$ including all monomials present in the objective function and implemented using speedups (1) and (3) from Remark~\ref{rem:faster_computations}. For subfigure (b) we used speedups (1) and (2) from Remark~\ref{rem:faster_computations}.}
	
	\label{fig:asymchsh_comparison}
\end{figure}
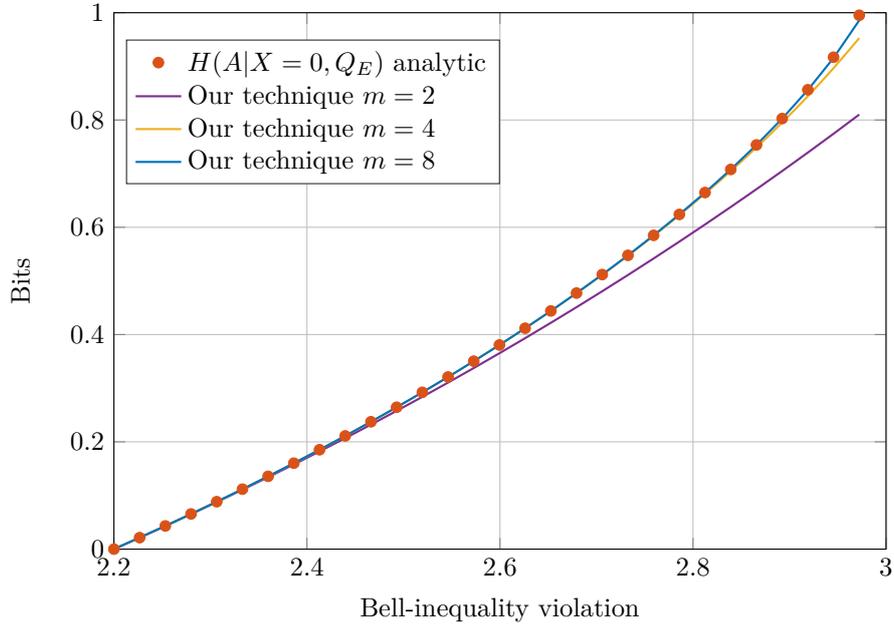
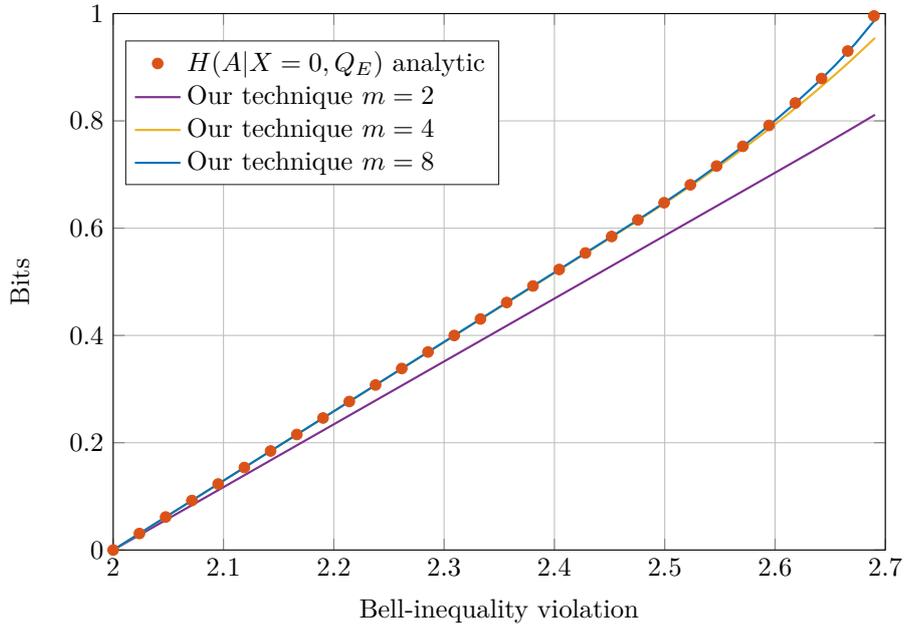

\section{Entropy accumulation and min-tradeoff functions}\label{app:eat}

In order to compute the rates of finite round DI-RE or DI-QKD protocols, one can bound the total conditional smooth min-entropy $H_{\min}^{\epsilon}(\mathbf{AB} | \mathbf{XY} E)$ accumulated by the devices during the protocol~\cite{Renner}. There are two tools, the entropy accumulation theorem~\cite{DFR,DF} and quantum probability estimation~\cite{QPE}, that allow one to break down this total smooth min-entropy into smaller, easier to compute quantities. For instance, the entropy accumulation theorem roughly states that in an $n$-round protocol 
\begin{equation}
	H_{\min}^{\epsilon}(\mathbf{AB}|\mathbf{XY}E) \geq n f(q) - O(\sqrt{n})
\end{equation}
where $f(q)$ is a so-called \emph{min-tradeoff function} defined as any function $f$ that maps probability distributions over a finite set $\mathrm{C}$ (see the discussion around~\eqref{eq:rate_optimization}) to $\RR$ such that 
\begin{equation}
	f(q) \leq \inf H(AB|XYE)
\end{equation}
where the infimum is a device-independent optimization over all strategies compatible with the average statistics $(C,q)$ observed in the protocol.

For a fixed distribution $q$ the rate calculations as performed in the main text will give a valid value for a potential min-tradeoff function. However, as we will now show, a single rate computation for some distribution $q'$ can also be used to construct a min-tradeoff function $f_{q'}$ that can then be evaluated for any distribution $q$. This construction is a consequence of weak duality for semidefinite programming problems~\cite{BoydVan}. 

Consider the following primal and dual pair of semidefinite programs
\begin{equation}
	\begin{aligned}
		p(b) = &\inf_X\quad \tr{C X} \\
		&\mathrm{s.t.} \quad \,\tr{F_i X} = b_i \qquad \qquad \text{for all } i\\
		& \qquad \,\,\, X \geq 0
	\end{aligned}
\end{equation}
and
\begin{equation}
	\begin{aligned}
		d(b) = &\sup_{\lambda,Y}\quad \sum_{i} \lambda_i b_i \\
		&\,\mathrm{s.t.} \quad \,C - \sum_i \lambda_i F_i - Y \geq 0 \\
		&\, \qquad \,\,\, Y \geq 0
	\end{aligned}
\end{equation}
where $b\in \RR^m$ is some constraint vector and $C,F_i$ are real symmetric matrices. For the purposes of this exposition the primal problem should be thought of as the NPA moment matrix relaxation, where $X$ is the moment matrix.\footnote{Note that it is sufficient to consider real moment matrices when running our computations. This is due to the fact that for any feasible complex moment matrix $\Gamma$ the matrix $(\Gamma + \overline{\Gamma})/2$ is also a feasible moment matrix with the same objective value where $\overline{\Gamma}$ denotes the matrix corresponding to elementwise complex conjugation of $\Gamma$.} The constraints $\tr{F_i X} = b_i$ can then be any general NPA constraints, as well as any statistical constraints that may have been imposed on the devices, e.g., for some statistical test $(C,q)$ we would impose constraints of the form $\sum_{(a,b,x,y): C(a,b,x,y) = c} p(a,b,x,y)  = q(c)$. We want to show that for any fixed $\hat{b}$, a feasible point $(\hat{\lambda}, \hat{Y})$ to the dual problem parameterized by $\hat{b}$ defines a function $g(b) = \sum_i \hat{\lambda}_i b_i$ that is everywhere a lower bound on the optimal primal value $p(b)$. To see this note that for any feasible point $X$ of the primal problem parameterized by $b$ we have 
\begin{equation}
\begin{aligned}
	0 &\leq \tr{X (C - \sum_i \hat{\lambda}_i F_i - \hat{Y})} \\
	&= \tr{X C} - \sum_i \hat\lambda_i \tr{X F_i} - \tr{X\hat Y} \\
	&= \tr{X C} - \sum_i \hat{\lambda}_i b_i - \tr{X\hat Y} \\
	&\leq \tr{XC} - g(b)
\end{aligned}
\end{equation}
where for both inequalities we used the fact that $\tr{AB} \geq 0$ when $A$ and $B$ are both positive semidefinite. Rearranging and taking the infimum over all feasible $X$ we find that $ g(b) \leq p(b)$. Thus, from a single solution to the dual problem we can derive a function $g$ which is everywhere a lower bound on the optimal primal value. Moreover, by modifying the choice of constraint vector $\hat{b}$ with respect to which we solve the dual we can derive different lower bounding functions.

As mentioned above in the problems we consider the constraint vector will in general consist of constraints that are fixed and constraints that correspond to the statistical test $(C,q)$ which we vary as $q$ varies. As such we can split our constraint vector $b$ into $(b_{\mathrm{fixed}}, b_{\mathrm{vary}})$ where $b_{\mathrm{vary}}$ is precisely the distribution $q$ viewed as a vector. Then by defining $\alpha = \lambda_{\mathrm{fixed}} \cdot b_{\mathrm{fixed}}$ where $\lambda_{\mathrm{fixed}}$ is the part of the dual solution $\lambda$ corresponding to the $b_{\mathrm{fixed}}$ part of $b$, we can derive a function $g(q) = \alpha + \lambda_{\mathrm{vary}} \cdot q$ where we have written $q$ in place of $b_{\mathrm{vary}}$. Thus, for any primal problem corresponding to a lower bound on rate of a protocol satisfying some statistical test $(C,q)$ , e.g.,  $\inf H(A|X=0,E)$, we immediately get an affine function $g(q)$ that satisfies 
\begin{equation}
g(q) \leq \inf H(A|X=0, E)
\end{equation}
which is in essence a min-tradeoff function. One could then use $g(q)$ together with the entropy accumulation theorem to get bounds on the rates of finite round protocols and prove security in exactly the same manner as in~\cite{BRC}.

\section{Improved bounds for noisy-preprocessing}

For this section we assume that all Hilbert spaces are finite dimensional. When deriving the statement~\eqref{eq:D_ub_boundedops} we removed the $m^\text{th}$ term in the summation $D_{F_1}$ by lower bounding it by a trivial value $\tr{\rho} - \lambda \tr{\rho}$. In the application to computing device-independent rates $\lambda = 1$ and the lower bound is $0$. This bounding was done in order to guarantee that all the operators in the subsequent NPA hierarchy relaxations were explicitly bounded which can help improve numerical stability. However, in the application to DI-QKD where we consider noisy-preprocessing we can derive a tighter bound that depends on the bitflip probability $q\in[0,1/2]$. 

First note that if $\rho$ is a state, i.e $\tr{\rho}=1$, then
\begin{equation}
\begin{aligned}
	D_{F_1}(\rho\|\sigma) = 1 - \tr{\rho^2 \sigma^{-1}}.
\end{aligned}
\end{equation}
Following the proof of Theorem~\ref{thm:D_upper_bounds} we need to lower bound this quantity or equivalently upper bound $\tr{\rho^2 \sigma^{-1}}$. In the setting of DI-QKD considered in the main text we have that after measuring but before the noisy-preprocessing Alice and Eve's joint state is
\begin{equation}
\rho_{AE} = \sum_a \outer{a} \otimes \rho_E(a)
\end{equation}
where $\rho_E(a) = \ptr{Q_AQ_B}{\rho_{Q_AQ_BQ_E}(M_{a|x^*} \otimes \id \otimes \id)}$. After Alice applies noisy-preprocessing the state transforms to 
\begin{equation}
\begin{aligned}
\widetilde{\rho}_{AE} &= \sum_{a} ((1-q)\outer{a} + q \outer{\overline a}) \otimes \rho_E(a) \\
&= \sum_a \outer{a} \otimes \left((1-q)\rho_E(a) + q\rho_E(\overline a) \right)
\end{aligned}
\end{equation}
where we have defined $\overline a = a \oplus 1$ and we can also see that $\ptr{A}{\widetilde{\rho}_{AE}} = \rho_E$, i.e. Eve's marginal state is unchanged by Alice's local data processing. We are interested in upper bounds on $\tr{\widetilde{\rho}_{AE}^2 (\id \otimes \rho_E^{-1})}$ that are smaller than the previously obtained bound of $1$. Taking the partial trace over the $A$ system we have
\begin{equation}
\tr{\widetilde{\rho}_{AE}^2 (\id \otimes \rho_E^{-1})} = \tr{\sum_a\left((1-q) \rho_E(a) + q \rho_E(\overline a) \right)^2 \rho_E^{-1}}.
\end{equation}
Now by a direct calculation we have
\begin{equation}
\begin{aligned}
	\sum_a\left((1-q) \rho_E(a) + q \rho_E(\overline a) \right)^2 &= \sum_a (1-q)^2 \rho_E(a)^2 + q^2 \rho_E(\overline a)^2 + q(1-q)(\rho_E(a)\rho_E(\overline a) + \rho_E(\overline a)\rho_E(a)) \\
	&= (1 - 2q(1-q))(\rho_E(0)^2 + \rho_E(1)^2) + 2 q (1-q)(\rho_E(0)\rho_E(1) + \rho_E(1) \rho_E(0)) \\
	&= (1 - 4q(1-q)) (\rho_E(0)^2 + \rho_E(1)^2) + 2 q(1-q) (\rho_E(0) + \rho_E(1))^2 \\
	&=(1 - 4q(1-q)) (\rho_E(0)^2 + \rho_E(1)^2) + 2 q(1-q) \rho_E^2.
\end{aligned}
\end{equation}
Applying this to the above expression we find 
\begin{equation}
\begin{aligned}
	\tr{\widetilde{\rho}_{AE}^2 (\id \otimes \rho_E^{-1})} &= (1-4q(1-q)) \tr{\rho_{AE}^2 (\id \otimes \rho_E^{-1})} + 2q(1-q) \\
	&\leq 1-2q(1-q)
\end{aligned}
\end{equation}
where we used the fact that $\tr{\rho_{AE}^2 (\id \otimes \rho_E^{-1})} \leq 1$. We can see that this bound is tight at $q=1/2$ as this implies the Petz entropy of order $2$,  $H_2(A|E) = - \log_2 \tr{\rho_{AE}^2 \rho_E^{-1}}$, must be at least $1$. This is exactly what we expect as Alice's system at $q=1/2$ is just a uniformly random bit. Similarly if $q=0$ we recover our original bound without preprocessing.

Putting everything together, we have
\begin{equation}
D_{F_1}(\rho\|\sigma) \geq 1-2 q (1-q)
\end{equation}
this implies that the constant term $c_m$ in Lemma~\ref{lem:di_rewriting} can be replaced by 
\begin{equation}
c_m = \frac{ 2 q (1-q)}{m^2 \ln 2} + \sum_{i=1}^{m-1} \frac{w_i}{t_i \ln 2} 
\end{equation}
when computing key-rates for DI-QKD protocols that include the noisy-preprocessing step.
\end{document}